\newcommand{\vV}{\textbf{V}}
\newcommand{\vP}{\textbf{P}}
\newtheorem{prop}{Proposition}
\newtheorem{defi}{Definition}
\begin{document}


\title{A prediction-based forward-looking vehicle dispatching strategy for dynamic ride-pooling}

\author{Xiaolei~Wang, Chen~Yang, Yuzhen~Feng, Luohan~Hu, Zhengbing~He, {\it Senior Member, IEEE}
\thanks{The work described in this study was supported by grants from the National Natural Science Foundation of China under Project No. 72022013, No. 72021002, and No.72061127003. Xiaolei Wang was also supported by the Fundamental Research Funds for the Central Universities, and CCF-DiDi GAIA Collaborative Research Funds for Young Scholars. (Corresponding author: Xiaolei Wang)}
\thanks{Xiaolei Wang, Chen Yang, ,Yuzhen Feng and Luohan Hu are with the School of Economics and Management, Tongji University, Shanghai, 200092, China (e-mail: yangchen\_tju@tongji.edu.cn; xiaoleiwang@tongji.edu.cn; fyz020301@outlook.com; lhhu2000@outlook.com).
Zhengbing He is with the Senseable City Lab, Massachusettes Instittue of Technology, Cambridge, MA, United States (e-mail: he.zb@hotmail.com).
}
}

\maketitle
\begin{abstract}
For on-demand dynamic ride-pooling services, e.g., Uber Pool and Didi Pinche, a well-designed vehicle dispatching strategy is crucial for platform profitability and passenger experience. Most existing dispatching strategies overlook incoming pairing opportunities, therefore suffer from short-sighted limitations. In this paper, we propose a forward-looking vehicle dispatching strategy, which first predicts the expected distance saving that could be brought about by future orders and then solves a bipartite matching problem based on the prediction to match passengers with partially occupied or vacant vehicles or keep passengers waiting for next rounds of matching. 
To demonstrate the performance of the proposed strategy, a number of simulation experiments and comparisons are conducted based on the real-world road network and historical trip data from Haikou, China. 
Results show that the proposed strategy outperform the baseline strategies by generating approximately 31\% more distance saving and 18\% less average passenger detour distance. 
It indicates the significant benefits of considering future pairing opportunities in dispatching, and highlights the effectiveness of our innovative forward-looking vehicle dispatching strategy in improving system efficiency and user experience for dynamic ride-pooling services.

\end{abstract}

\textbf{Keyworads:}
Ride-pooling, vehicle dispatching, forward-looking, distance saving, prediction

\section{Introduction}
Dynamic ride-pooling services, such as Uber Pool and Didi Pinche, are on-demand mobility services that enable passengers to share vehicle space and split cost with other passengers during the whole or part of their trips \cite{dimitrakopoulos2011intelligent}. 
To respond to on-demand trip requests, a ride-pooling platform dispatches vacant or partially occupied vehicles to pick up passengers in real-time, and keeps searching for matching orders on the fly. 
As demonstrated in \cite{santi2014quantifying} based on 170 million taxi orders in New York, almost 100\% of taxi trips in New York can be shared with other trips in less than a 2-minute delay, if all taxi passengers were willing to take dynamic ride-pooling services; and the taxi industry in New York can also maintain its instantaneous service quality with 70\% of the instantaneous taxi fleet size \cite{vazifeh2018addressing}.
Similar observations are reported for many other cities and countries such as Munich, Germany, Haikou, China, San Francisco, the United States, and Singapore \cite{tachet2017scaling,engelhardt2019quantifying,zhu2022potential,zwick2021agent,zwick_ride-pooling_2021}.

To provide an attractive and profitable service, an efficient vehicle dispatching strategy is of ultimate importance for a ride-pooling platform. 
In recent years, the design of vehicle dispatching strategies for dynamic ride-pooling has drawn a lot of attention and witnessed significant advancements. In most literature, it is formulated into a dial-a-ride problem (DRP) and the DRP is then solved in a rolling horizon framework \citep{cordeau2007dial,santos2015taxi}. While exact solution methods for the DRP are proposed in \cite{cordeau2007dial,santos2015taxi}, most studies employ the insertion heuristic method to support real-time dispatching decisions \citep{2013Dynamic,ma2013t,hosni2014shared,jung2016dynamic,zwick2022shifts}. After receiving an order, these dispatching strategies generally try different insertions of the order into the job list of nearby vehicles, evaluating the effects of each insertion (on system distance saving, platform's profit, and detours of passengers on-board), and then select the vehicle that mostly aligns with the platform's goals without making on-board passengers worse off. In 2017,  a different on-demand high-capacity vehicle dispatching framework was proposed by \cite{doi:10.1073/pnas.1611675114}. They decouple the problem by first computing feasible trips from a pairwise shareability graph and then assigning trips to vehicles.This framework is general and can be used for many real-time multivechile, multitask assignment problems, and is now widely considered as the state-of-the-art method.   
Nevertheless, all the above dispatching strategies are myopic in the sense that they only maximize instantaneous reward of each dispatching, without taking future pairing opportunities into consideration. 

How to take future pairing opportunities into consideration is the most challenging part in the vehicle dispatching problem for dynamic ridepooling service. 
Consider the simple example as shown in Figure \ref{figure: An example of vehicle dispatching problem}, a waiting passenger $P_1$ can be assigned to one of the partially occupied vehicles $\text{V}_1$ and $\text{V}_2$ and the vacant vehicle $\text{V}_3$, or the platform can keep $P_1$ waiting. Most (if not all) myopic dispatching strategies would assign $P_1$ to the partially occupied vehicles $\text{V}_1$ or $\text{V}_2$, as they bring about immediately rewards (e.g., distance saving). 
However, this does not necessarily imply that dispatching $\text{V}_3$ to $\text{P}_1$ or keeping $\text{P}_1$ waiting is a worse choice. In both cases, P1 may encounter better ride-pooling partners and form a ride-pooling trip with longer distance saving and/or shorter detour distance afterward. 

\begin{figure}[!t]
\centering
\includegraphics[width=3.2in]{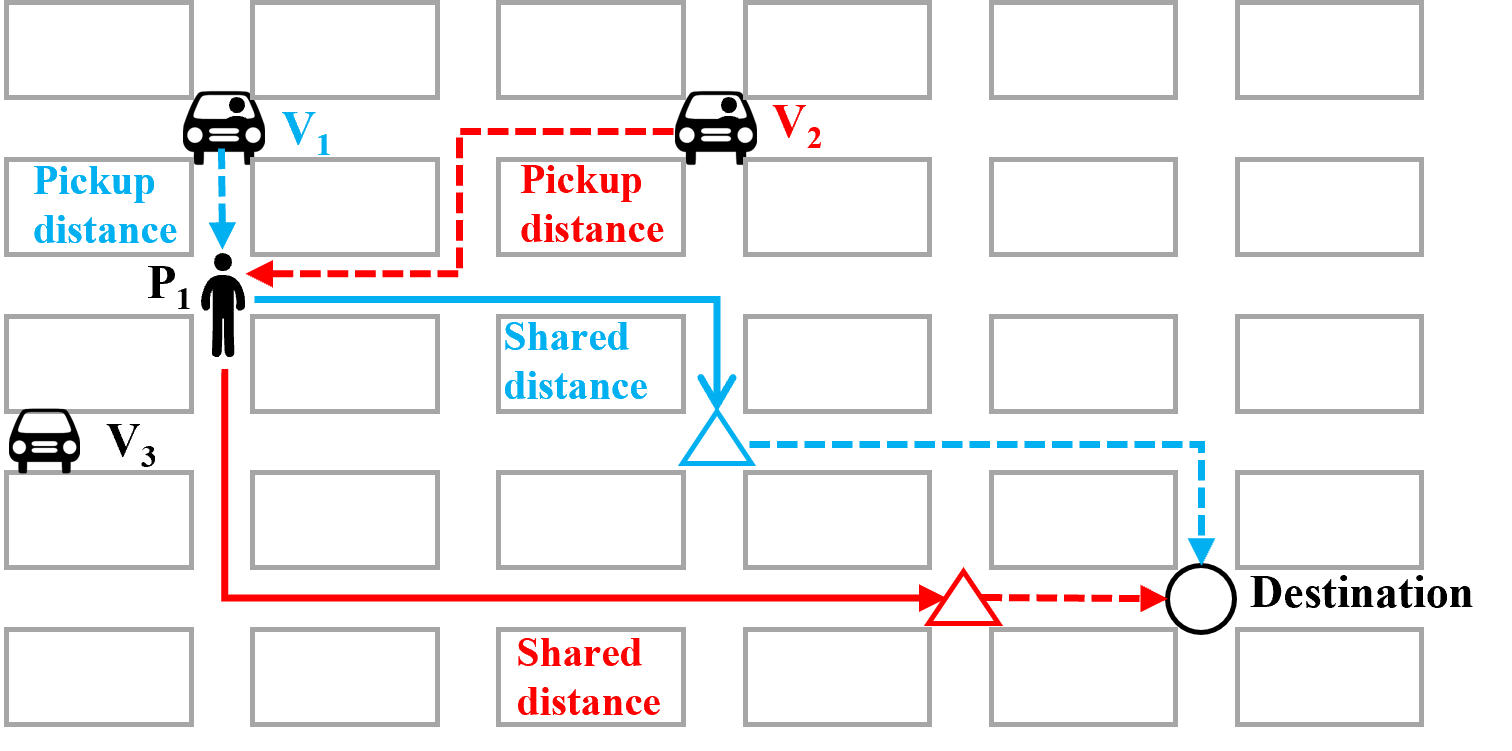}
\caption{An example of vehicle dispatching problem for dynamic ride-pooling ($\text{V}_1$ and $\text{V}_2$ are partially occupied vehicles that can form ridepooling trips with the waiting passenger $\text{P}_1$, and $\text{V}_3$ is a vacant vehicle). The blue and red triangles indicate the destinations of the passengers on the partially occupied vehicles $\text{V}_1$ and $\text{V}_2$, respectively, and the black cycle is the destination of passenger $\text{P}_1$.}
\label{figure: An example of vehicle dispatching problem}
\end{figure}


In recent years, there have also been some studies that focus on forward-looking vehicle dispatching strategy design \citep{xian2020, 2020Neural}. They both formulate the problem as a multi-stage stochastic programming problem, where the platform solves a bipartite matching problem in each stage, and the resulting dispatching decisions together with the random appearance of ride-pooling orders lead to different post-decision states. The utility of each passenger-vehicle matching pair is defined as the difference between the pre-decision state value and the post-decision state value. Specifically, \cite{xian2020} approximated the state value as a linear function of vehicle state variables, and updated the parameters in the approximated value function with the dual values of each matching problem. And \cite{2020Neural} approximated the state value function with a neural network, and updated the network parameters with the gradient descent method. Both methods require a large number of simulations to learn the state values, and the convergence of the state value function is not always guaranteed. Moreover, the learned state values are often difficult to interpret and not resilient to changes in the environment. 

Recently, \cite{wang2021predicting} developed an accurate method to predict the pairing probability, expected ride distance and expected shared distance for each ridepooling order. Assuming that ridepooling orders between each origin and destination (OD) pair follow a Poisson process with a given mean arrival rate, they proposed a system of nonlinear equations to capture the complex matching and competing relationship among orders between different OD pairs in a general road network. Solving the system of nonlinear equations simultaneously yields the pairing probability, expected ride distance and expected shared distance of all OD pairs. This prediction method shows surprisingly satisfactory accuracy in simulation experiments, thus opens up a new path for the design of forward-looking vehicle dispatching strategies.

Enlightened by \cite{wang2021predicting}, this paper presents a new method for the design of forward-looking vehicle dispatching for dynamic ridepooling service. It adopts an offline-prediction and online-matching approach: 1) based on historical data, we first take advantage of the prediction method proposed by \cite{wang2021predicting} to estimate the expected distance saving for passengers between different OD pairs if they were kept waiting or assigned to vacant vehicles; 2) based on the prediction, we set the utility of each match in the bipartite matching problem to generate real-time matching results.  
The main merits of the proposed matching strategy are summarized as follows.
\begin{itemize}
\item The proposed strategy sets the utility of each match according to the predicted distance saving that could be brought about by future orders, therefore is endowed with the forward-looking property. Based on the predicted expected distance saving of different passengers under different options, our strategy deliberately gives up some appeared pairing opportunities, lets some passengers to wait a little bit longer, and assigns some passengers to vacant vehicles immediately. %
  
  \item The proposed matching strategy shows significant improvements over myopic strategies in terms of both total distance saving and average passenger detour distance. As shown by extensive simulation experiments, in comparison with the state-of-the-art strategy\cite{doi:10.1073/pnas.1611675114}, our proposed strategy can generate up to 31\% more total distance saving, and reduce the average passenger detour distance by 18\%. The significant improvements in both system performance and user experience strongly support the value of the predicted information in enhancing the vehicle dispatching decisions. 
  \item The designing method of our proposed forward-looking dispatching strategy is more efficient than previous methods. We obtain the predicted distance saving of each option for each OD pair by solving a system of nonlinear equations based on the historical trip demand. In comparison with existing methods in \cite{xian2020} and \cite{2020Neural}, it avoids the need for time-consuming state-value estimation or extensive training and learning with large-scale dataset. 
\end{itemize}

The rest of this work is organized as follows. 
Section \ref{sec:preliminaries} introduces a preliminary description of the vehicle dispatching problem for dynamic ride-pooling, and describes a myopic matching strategy that serves as a benchmark. 
Section \ref{sec:forward-looking matching strategy} introduces how to determine the utility of each match based on predictions, and then proposes a forward-looking matching strategy. 
Section \ref{sec: numerical study} demonstrates the effectiveness of the proposed strategy through various simulation experiments. 
Finally, Section \ref{sec: conclusion} concludes the paper.

\section{Preliminaries}\label{sec:preliminaries}
\subsection{Problem description}\label{subsec:problem description} 

Consider a ride-pooling platform that conducts a vehicle-passenger batch matching every $\Delta t$ time interval. 
Each waiting passenger can be assigned to a vacant or a partially occupied vehicle, or be delayed to the next round of matching. 
Passengers will cancel their orders if they are not matched within their maximum waiting time, which is denoted by $K$ (seconds). 
In each batch matching, let $\vP$ be the set of passengers waiting for responses, $\vV_0$ be the set of vacant vehicles, and $\vV_1$ be the set of partially occupied vehicles. 
And let $\left\{0\right\}$ be the dummy vehicle representing the option of keeping passengers waiting at the origin for better pairing opportunities in the future. 
To ensure stable ride-pooling service quality, this study assumes a passenger can be matched to a vacant vehicle only if the pick-up distance is less than $\bar{R}$; and a passenger can be matched to a partially occupied vehicle only if the following pairing conditions are met: 
1) the pick-up distance is less than $\bar{R}$; 
2) the resulting detour distance of both passengers does not exceed $\bar{D}$. 
For each passenger $p\in \vP$, let $\vV_{0,p}$ and $\vV_{1,p}$ respectively be the sets of vacant vehicles and partially occupied vehicles that satisfy the above pairing conditions, and $\vV_p=\vV_{0,p}\cup\vV_{1,p}\cup\left\{0\right\}$ be the set of matching options. 
And conversely, for each vehicle $v\in \vV_0\cup\vV_1$, the set of waiting passengers $\vP_v$ can be predetermined based on the aforementioned pairing conditions. 
Let $u(v,p)$ be the utility of assigning a passenger $p\in \vP$ to a vehicle $v\in \vV_p=\vV_{0,p}\cup\vV_{1,p}\cup\left\{0\right\}$, and $\boldsymbol{x}=\left\{x_{v,p},v\in \vV_p,p\in \vP\right\}$ be the vector of binary decision variables with $x_{v,p}=1$ if passenger $p\in \vP$ is assigned to vehicle $v\in \vV_p$ and $x_{v,p}=0$ otherwise. 
If the platform does not take into account the option of delay-matching of passengers for better pairing opportunities in the future, then the ride-pooling dispatching problem is to solve the following bipartite matching problem at the end of each matching interval:

\begin{equation}
\label{eq:bipartite matching}
  \max_{\textbf{x}}\sum_{p\in \vP}\sum_{v\in \vV_p}  u(v,p) x_{v,p}
\end{equation}
\begin{align}
 \text{s.t.}  &\sum_{v\in \vV_p} x_{v,p}= 1, \forall{p\in\vP}\label{eq: pas_cons}\\
  &\sum_{p\in \vP_v}x_{v,p} \leq 1, \forall{v\in \vV_0\cup\vV_1}\label{eq:veh_cons}\\
  &x_{v,p}\in\{0,1\}, \forall{v\in \vV_p,p\in \vP}\label{eq:integer-cons}.
\end{align}
The objective function (\ref{eq:bipartite matching}) represents the total utility of each matching problem.  
The first constraint, Eq. (\ref{eq: pas_cons}), ensures that every waiting passenger is assigned to a vehicle or kept waiting at the origin. 
The second constraint, Eq. (\ref{eq:veh_cons}), ensures that each vacant or partially occupied vehicle is matched with at most one passenger during each matching round. This bipartite matching problem (\ref{eq:bipartite matching})-(\ref{eq:integer-cons}) is a well-known integer programming problem which can be solved in polynomial time. 

The challenging part of this ride-pooling dispatching problem is to determine the utility of each match, i.e., the value of $u(v,p), \forall v\in \vV_p, p\in \vP$, so as to achieve a long-term goal over a one-day period. 
Various long-term goals have been set in dynamic ride-pooling literature. 
For example, in \cite{xian2020}, the goal is to minimize a weighted sum of total passenger waiting time and trip delay time. 
In \cite{2020Neural}, the goal is to maximize the total number of passengers served. 
Since both the platform's profit and the social welfare of ride-pooling services rely on the distance saving achieved through ride-pooling, this study assumes that the platform's goal is to maximize the total distance saving. 

\subsection{A myopic vehicle dispatching strategy}\label{subsec: A myopic vehicle dispatching strategy}
In this subsection, we present a myopic strategy to determine the utility of each match for maximal total distance saving. Let $l^{pk}(v,o_p)$ represent the pick-up distance between the location of vehicle $v$ and the origin of passenger $p$, and $e(v,p)$ be the resulting distance saving if a vehicle $v\in\vV$ is assigned to pick up the passenger $p\in \vP$. 
If a passenger $p$ is assigned to a partially occupied vehicle $v_{p'}\in \vV_{1,p}$ with passenger $p'$ on board, the distance saving can be immediately calculated by Eq. (\ref{eq:distance saving}): 
\begin{equation}
\begin{aligned}
\label{eq:distance saving}
  e(v_{p'},p) = &l(o_{p'},d_{p'}) + l(o_{p},d_{p}) \\
  &- \min \{l^{fofo}(p',p), l^{folo}(p',p)\}.       
\end{aligned}
\end{equation}
In Eq. (\ref{eq:distance saving}), $o_{p'}$ and $o_p$ respectively indicate the origins of passenger $p'$ and $p$, while $d_{p'}$ and $d_p$ respectively indicate their destinations. $l(o,d)$ denotes the shortest path distance from node $o$ to node $d$ on the road network. 
Thus, $l(o_{p'},d_{p'})$ and $l(o_{p},d_{p})$ correspond to the exclusive riding distances for passengers $p'$ and $p$.
The two terms in the bracket correspond to the total distance of the ride-pooling trip (from the pick-up of the first passenger to the drop-off of both passengers) when passengers are served in first-on-first-off ($l^{fofo}(p',p)$) and first-on-last-off manner ($l^{folo}(p',p)$) respectively.
Figure \ref{figure: Two ways to a delivery a ride-pooling matching pair} illustrates the routes of a vehicle serving a ride-pooling trip in the two different manners, and their trip distance are given by 

\begin{equation}
\begin{aligned}\label{fofo}
  &l^{fofo}(p',p) = l(o_{p'},v_{p'}) + l(v_{p'},o_p)+l(o_p,d_{p'})+l(d_{p'},d_p),\\ 
  &l^{folo}(p',p) = l(o_{p'},v_{p'}) + l(v_{p'},o_p)+l(o_p,d_p)+l(d_p,d_{p'}),
\end{aligned}
\end{equation}
where $v_{p'}$ is a little abused here to represent the location of vehicle $v_{p'}$ when it is dispatched to pick up the second passenger $p$. 

On the other hand, if a passenger is assigned to a vacant vehicle, the instantaneous distance saving from this match is 0, resulting in a net distance saving of $-l^{pk}(v,o_p)$.
So in the myopic strategy, the platform set the utility function of each match as follows:
\begin{align}\label{eq:utility_myopic}
  u(v,p)=\left\{\begin{matrix}
-l^{pk}(v,o_p),  &v\in \vV_{0,p}\\ 
e(v_{p'},p)-l^{pk}(v_{p'},o_p),  &v\in \vV_{1,p}\\
-\infty, &v\notin \vV_{0,p}\cup\vV_{1,p} 
\end{matrix}\right.
\end{align}

\begin{figure}[!t]
    \centering
    \subfloat[First-on-first-off]{\includegraphics[width=3in]{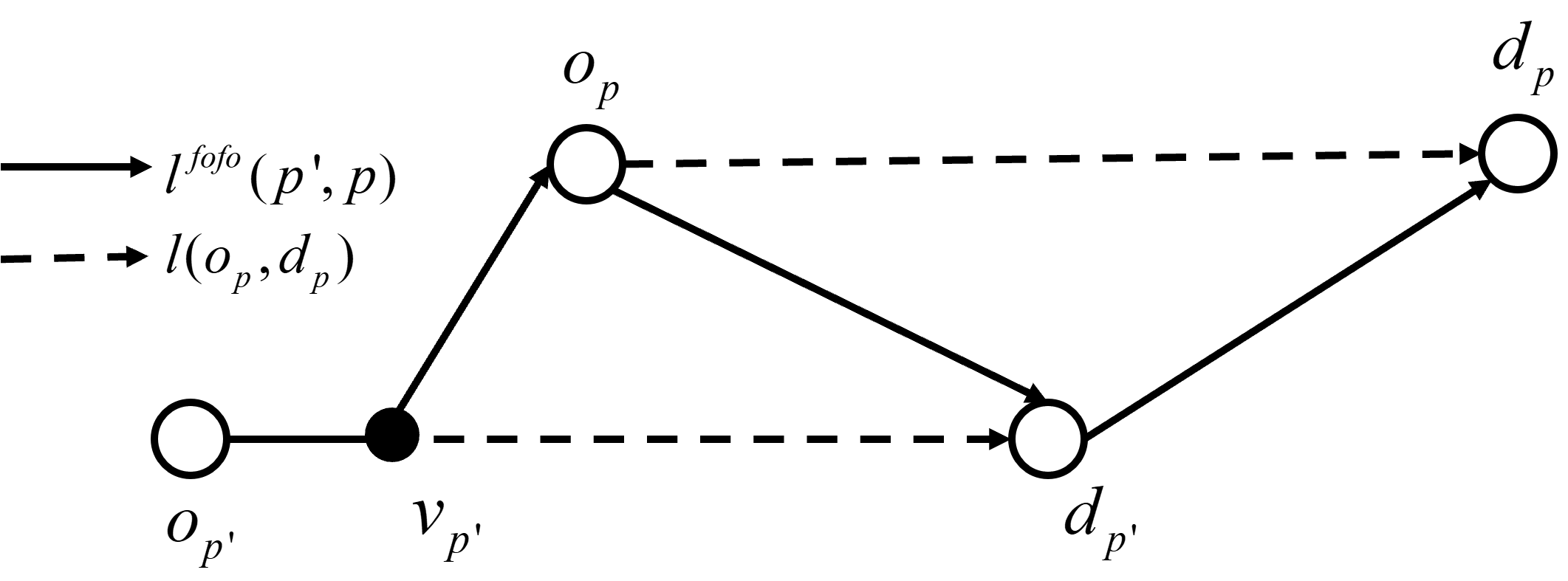}}%
    \label{First-on-first-off}
    
    \subfloat[First-on-last-off]{\includegraphics[width=3in]{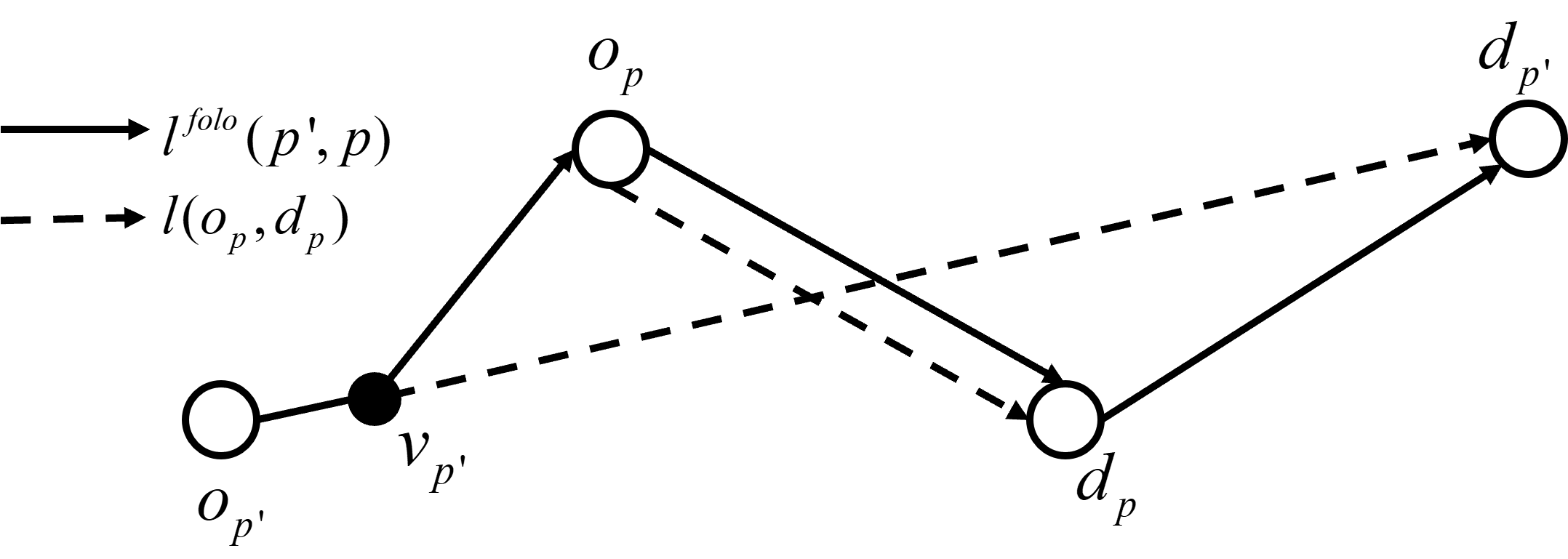}}%
    \label{First-on-last-off}
    \caption{Two ways to serve ride-pooling passengers.}
    \label{figure: Two ways to a delivery a ride-pooling matching pair}
\end{figure}

Apparently, by utilizing Eq. (\ref{eq:utility_myopic}) to determine the utility of each match, the platform will always give matching priorities to partially occupied vehicles that contributes to positive distance saving, and no passenger would be kept waiting at the origin if there are available vacant vehicles nearby. This strategy is subject to shortsighted limitations as it does not take future pairing opportunities into consideration. 

\section{A forward-looking matching strategy for dynamic ride-pooling services}\label{sec:forward-looking matching strategy}
 To make the vehicle dispatching strategy forward-looking, it is essential to have an accurate prediction of the expected distance saving in the future if a passenger is assigned to a vacant vehicle or kept waiting at her/his origin.
 Let $\bar e(p)$ be the expected distance saving of matching passenger $p$ to a vacant vehicle $v\in \vV_{0,p}$, and $\bar e(0,p)$ be the expected distance saving of keeping passenger $p$ waiting at the origin. 
 Note that both $\bar e(p)$ and $\bar e(0,p)$ depend on the characteristics of passenger $p$ only. In this section, we first briefly introduce the prediction method proposed in \cite{wang2021predicting}, then show how we apply the predicted information to determine the value of $\bar e(p)$ and $\bar e(0,p)$ for each passenger $p\in\vP$.
 Finally, the utility function of each match is established, based on the predicted information from a forward-looking perspective. 
 For convenience, Table \ref{Notations} provides a summary of notations used in this paper.
\begin{table*}[]
\caption{List of notations}
\label{Notations}
\centering
\begin{tabular}{ll}
\hline
\textbf{Notations} & \textbf{Indications} \\ 
\hline
$\textbf{P}$ & The set of passengers \\
$p$ & The index of passengers, $p \in \textbf{P}$\\
$o_p$ & The origin of passenger $p$ \\
$d_p$ & The destination of passenger $p$ \\
$l(o,d)$ & The shortest path distance between location $o$ and $d$ \\
$t$ & The matching time index, $t \in [1,T]$ \\
$T$ & The upper limit of matching time index \\
$\Delta t$ & The length of time interval in each batch matching\\
$\textbf{V}_0$ & The set of vacant vehicles \\
$\textbf{V}_1$ & The set of partially occupied vehicles \\
$\{0\}$ & The dummy vehicles associated with the option of keeping passengers waiting \\
$\textbf{V}$ & The set of matching options, $\textbf{V}= \textbf{V}_0 \cup \textbf{V}_1 \cup \{0\}$ \\
$v$ & The index of vehicles, $v \in \textbf{V}$ \\ 
$u(v,p)$ & The utility function of matching vehicle $v$ and passenger $p$\\
$\bar{R}$ & The maximum pick-up distance threshold \\
$\bar{D}$ & The maximum detour distance threshold\\
$K$ & The maximum waiting time threshold for passengers \\
$k$ & The matching rounds of passengers waiting for response, $k \in [1,K/\Delta t]$\\
$\textbf{N}$ & The set of nodes \\
$\textbf{A}$ & The set of links \\
$\textbf{W}$ & The set of OD pairs \\
$w$ & The index of OD pair, $w \in \textbf{W}$ \\
$\textbf{A}_w$ & The set of links traversed by the exclusive-riding path of OD pair $w \in \textbf{W}$ \\
$\lambda_{w}$ & The arrival rate of unpaired passengers of OD pair $w \in \textbf{W}$\\
$s(w)$ & The seeker-state of OD pair $w \in \textbf{W}$, which represents the state of \\ 
 &          newly-appeared passengers in OD pair $w \in \textbf{W}$ \\
$t(a, w), a \in \textbf{A}_w$ & A taker-state, which represent the state of passengers in OD pair 
\\ & $w \in \textbf{W}$ traveling on link $a \in\textbf{A}_w$ \\
$\textbf{T}_{s(w)}$ & The set of matching taker-states of seeker-state $s(w) \in \textbf{S}$ \\
$\textbf{S}_{t(a,w)}$ & The set of matching seeker-states of taker-state $t(a, w) \in \textbf{T}$ \\
$p_{s(w)}$ & The probability that a seeker in state $s(w), w \in \textbf{W}$ gets matched\\
$p_{t(a,w)}$ & The probability that a taker in state $t(a, w) \in \textbf{T}$ gets matched\\
$\rho_{t(a,w)}$ & The probability of having at least one taker in state $t(a, w) \in \textbf{T}$ at any time moment \\
$\eta^{s(w')}_ {t(a,w)}$ & The mean arrival rate of pairing opportunities in state $s(w') \in \textbf{S}_{t(a,w)}$ 
\\ &  for takers in state $t(a, w) \in \textbf{T}$ \\
$\lambda_{t(a,w)}$ & The arrival rate of unpaired passengers into taker-state $t(a, w) \in \textbf{T}$\\
$r_w$ & The presumed response rate of OD pair $w \in \textbf{W}$ \\
$\bar{e}(p)$ & The expected distance saving of matching passenger $p$ with a vacant vehicle $v \in \textbf{V}_{p}$ \\
$\bar{e}(0,p)$ & The expected distance saving of keeping passenger $p$ waiting at the origin \\
$e(v_{p'},p)$ & The deterministic distance saving from matching a partially occupied 
\\ & vehicle $v_{p'}$ with passenger $p$\\
$e(s(w_p))$ & The expected distance saving if passenger $p$ is matched in a seeker-state $s(w_p)$\\
$e(t(a,w_p))$& The expected distance saving if passenger $p$ is matched in a taker-state $t(a,w_p)$\\
$x_{v,p}$ & \begin{tabular}[c]{@{}l@{}}Binary decision variable indicating the matching result between vehicle $v$ and \\ passenger $p$ in the bipartite graph, with 1 for a match and 0 otherwise.\end{tabular} \\
\hline
\end{tabular}
\end{table*}

\subsection{Predicting the expected distance saving of passengers in each state of each OD pair}\label{subsec:Expected distance saving}
In this subsection, we first introduce the prediction method for passenger matching probability in each state of each OD pair developed in \cite{wang2021predicting}.
In a general network $G(\mathbf{N},\mathbf{A})$ with $\mathbf{N}$ being the set of nodes and $\mathbf{A}$ being the set of links, let $\mathbf{W}$ be the set of OD pairs of ride-pooling passengers. 
For each OD pair $w\in \mathbf{W}$, \cite{wang2021predicting} assumes that ride-pooling orders arrive following a Poisson process with a fixed mean demand rate $\lambda_w$, and travel along the same shortest path before getting paired with other orders, where the travel time of each link is assumed to be a constant. 
As a dynamic ride-pooling passenger may be matched at different locations along the route, they define passengers between each OD pair into different states.
\begin{defi}[seeker-state]\label{def:seeker}
    For each OD pair $w\in \mathbf{W}$, passengers waiting at the origin are defined as passengers in seeker-state $s(w)$. 
\end{defi}

\begin{defi}[taker-state]\label{def:taker}
    For OD pair $w\in W$, passengers who are on partially occupied vehicles and travelling on link $a\in \mathbf{A}$ along the route of OD pair $w$ are defined as passengers in taker-state $t(a,w)$.
\end{defi}

For clarity, in the following context of this study, the term "seeker" refers to a passenger in the seeker-state, who is waiting for a response from the ride-pooling platform. And the term "taker" corresponds to a passenger in the taker-state, who has been successfully picked up by a vacant vehicle. 

Let $\mathbf{S}$ be the set of seeker-states, $\mathbf{T}$ be the set of taker-states, and $\mathbf{T}_w$ be the set of taker-states for passengers between OD pair $w\in \mathbf{W}$. 
According to the same pairing conditions as described in subsection \ref{sec:preliminaries} of this paper, the set of taker-states $\mathbf{T}_{s(w)}$ is used to denote the ride-pooling partners that can be matched with the passenger in the seeker-state $s(w)$.
$\mathbf{S}_{t(a,w)}$ denotes the set of seekers which can be paired with the passenger in taker-state $t(a,w)$. 

Following the definition of seeker- and taker-states in Definition \ref{def:seeker} and \ref{def:taker}, a ride-pooling order between OD pair $w\in \mathbf{W}$ would get paired in one of the seeker-state $s(w)$ and taker-state $t(a,w)\in \mathbf{T}_w$, or fail to be paired. 
\cite{wang2021predicting} predicts the pairing probabilities of passengers in each seeker-state and taker-state of each OD pair $w \in \mathbf{W}$ by solving a system of nonlinear equations.
Specifically, they define the following five types of variables and model the interactions among these variables into a system of nonlinear equations: 
\begin{itemize}
\item  $p_{s(w)}$: the probability for a passenger getting paired in seeker-state $s(w)\in \mathbf{S}$; 
\item  $p_{t(a,w)}$: the probability for a passenger getting paired in taker-state $t(a,w)\in \mathbf{T}$; 
\item  ${\rho _{t\left( {a,w} \right)}}$: the probability of having at least one passenger in taker-state $t(a,w)\in \mathbf{T}$; 
\item  ${\eta^{s(w')} _{t\left( {a,w} \right)}}$: the arrival rate of pairing opportunities with seekers $s(w')$ for takers in state $t(a,w)\in \mathbf{T}$; 
\item $\lambda_{t(a,w)}$: the arrival rate of unpaired passengers for taker-state $t(a,w)\in \mathbf{T}$.
\end{itemize}
The detailed formulation of the system of nonlinear equations is omitted in this section (please refer to Appendix \ref{appendix:The system of nonlinear equations problems}).
However, it is important to note that by solving the system of nonlinear equations, the values of all the aforementioned variables for each seeker- and taker-states can be simultaneously determined.
 
Based on the calculated pairing probabilities in each seeker- and taker-states, we then introduce how to determine the expected distance saving $\bar e(p)$ and $\bar e(0,p)$ for any passenger $p$ in the next subsection.
At the end of this subsection, we note that this study assumes the ride-pooling platform adopts a batch matching dispatching strategy, which is different from the first-come-first-serve strategy assumed in \cite{wang2021predicting}. 
As a result, the predicted pairing probability may not be highly accurate for each individual passenger. 
Nevertheless, the numerical experiments in Section \ref{sec: numerical study} demonstrate that the predicted pairing probability can still significantly improve the performance of the dispatching strategy, which underscores the reliability and effectiveness of the proposed forward-looking strategy.

\subsection{Predicting the expected distance saving of matching passengers with vacant vehicles}\label{subsec:expected distance saving of each match}
Following the definition of seeker-state and taker-state in Subsection \ref{subsec:Expected distance saving}, if a passenger $p$ between OD pair $w_p\in W$ is assigned to a vacant vehicle, she/he may get paired in any of the taker-state $t(a,w)\in \mathbf{T}_{w_p}$ (or arrive at the destination without getting matched with anyone). So the expected distance saving that could be brought about by passenger $p$ after assigning it a vacant vehicle, i.e., $\bar e(p)$, is dependent on the passengers' pairing probability as well as the expected distance saving in each taker-state $t(a,w_p)\in \mathbf{T}_{w_p}$. 

Let $\bar e(t(a,w_p))$ be the expected distance saving for takers getting matched in taker-state $t(a,w_p)$. A passenger in taker state $t(a,w_p)$ may get paired with a seeker in any of its matching seeker-state $s(w)\in \mathbf{S}_{t(a,w_p)}$. 
Provided the arrival rate of pairing opportunities from seeker-state $s(w)\in \mathbf{S}_{t(a,w_p)}$ for takers in state $t(a,w_p)$, i.e., $\eta_{t(a,w_p)}^{s(w)}$, we can estimate that for all passengers who are matched in taker-state $t(a,w_p)$, a proportion $\frac{\eta_{t(a,w_p)}}{\sum_{s(w)\in \mathbf{T}_{s(w)}}{\eta_{t(a,w_p)}^{s(w)}}}
$ of them are paired with seekers $s(w)\in \mathbf{S}_{t(a,w_p)}$.  
Let $E(s(w),t(a,w_p))$ be the distance saving, if passenger $p$ in taker-state $t(a,w_p)$ forms a ride-pooling trip with a seeker $s(w)\in \mathbf{S}_{t(a,w_p)}$\footnote{ 
Given the origins and destinations of the seeker and the taker, $E(s(w),t(a,w_p))$ can be easily calculated according to Eq. (\ref{eq:distance saving}).}. Then the expected distance saving for passenger $p$ get paired in taker-state $t(a,w_p)$ can be given by

    \begin{equation}
    \label{eq:distance saving_each taker state}
  \bar e(t(a,w_p))=\sum_{s(w)\in \mathbf{S}_{t(a,w_p)}}\frac{\eta_{t(a,w_p)}^{s(w)}E(s(w),t(a,w_p))}{\sum_{s(w)\in \mathbf{S}_{t(a,w_p)}}\eta_{t(a,w_p)}^{s(w)}},
  \end{equation}

With the expected distance saving in each taker-state given by Eq. (\ref{eq:distance saving_each taker state}), the expected distance saving $\bar e(p)$ of matching passenger $p$ with any vacant vehicle then can be given by

\begin{align}\label{eq:distance saving_taker state}
  \bar e(p)=\sum_{t(a,w_p)\in \mathbf{T}_{w_p}}{\bar e(t(a,w_p))\frac{p_{t(a,w_p)}\lambda_{t(a,w_p)}}{(1-p_{s(w_p)})\lambda_{w_p}}},
\end{align}  
where $(1-p_{s(w_p)})\lambda_{w_p}$ is the total demand rate of passengers who are not paired in seeker-state $s(w)$, $p_{t(a,w_p)}\lambda_{t(a,w_p)}$ is the pairing rate of passengers in taker-state $t(a,w_p)$, and $\frac{p_{t(a,w_p)}\lambda_{t(a,w_p)}}{(1-p_{s(w_p)})\lambda_{w_p}}$ thus gives the passenger pairing probability in taker-state $t(a,w_p)$. 

\subsection{Predicting the expected distance saving of keeping passengers waiting}\label{subsec:expecte distance saving}
If passenger $p$ is kept waiting at the origin in the $k_p$th ($k\in[1,K/ \Delta t]$) round of matching, then she/he may get matched in the seeker-state $s(w_p)$ or be picked up by a vacant vehicle in the future rounds of matching, or quit the dispatching system after $K/ \Delta t$ rounds of unsuccessful matching. 
Let $r_w$ be the presumed response rate for orders between OD pair $w\in \mathbf{W}$, that is, the probability that an order between OD pair $w\in \mathbf{W}$ is assigned to a vacant or partially occupied vehicle in each round of matching, and let $\bar e(s(w_p))$ be the expected distance saving of passenger $p$ getting paired in seeker-state $s(w_p)$.
Provided the expected distance saving of matching $p$ to a vacant vehicle, i.e., $\bar e(p)$, we can approximate the expected distance saving of keeping $p$ waiting at the origin, i.e., $\bar e(0,p)$, by:
\begin{equation}
 \begin{aligned}
       \label{eq:e(0,p)}
  \bar e(0,p)=&\left[1-(1-r_w)^{K/\Delta t-k_p}\right] \\
  & \cdot \left[p_{s(w_p)}\bar e(s(w_p))+(1-p_{s(w_p)})\bar e(p)\right].
\end{aligned}   
\end{equation}
The term inside the first bracket is the probability that passenger $p$ is assigned to a (vacant or partially occupied) vehicle during the remaining $K/\Delta t-k_p$ rounds of matching (before the passenger cancel the order), and the term inside the second bracket is the expected distance saving if passenger $p$ is assigned to a (vacant or partially occupied) vehicle. 

Now we discuss how to calculate $\bar e(s(w_p))$ in Eq. (\ref{eq:e(0,p)}). 
Passenger $p$ in seeker-state $s(w_p)$ may get matched with a passenger in any of its matching taker state $t(a,w)\in \mathbf{T}_{s(w_p)}$. Given the arrival rate of pairing opportunities from seekers in state $s(w_p)$ for takers in state $t(a,w)\in \mathbf{T}_{s(w_p)}$, i.e., $\eta_{t(a,w)}^{s(w_p)}$, and the probability of passenger existences in each taker-state $t(a,w)$, i.e., $\rho_{t(a,w)}$, the expected distance saving for seeker $p$ getting matched while waiting at the origin, or equivalently, $s(w_p)$ can be given by
\begin{equation}\label{eq:distance saving of each seeker state}
    \bar e(s(w_p))=\frac{\sum_{t(a,w)\in \mathbf{T}_{s(w_p)}}E(s(w_p),t(a,w))\rho_{t(a,w)}\eta_{t(a,w)}^{s(w_p)}}{\sum_{t(a,w)\in \mathbf{T}_{s(w_p)}}\rho_{t(a,w)}\eta_{t(a,w)}^{s(w_p)}}.
  \end{equation}
where $\rho_{t(a,w)}\eta_{t(a,w)}^{s(w_p)}$ is the pairing rate between passengers in seeker-state $s(w_p)$ and taker-state $t(a,w)\in \mathbf{T}_{s({w_p})}$.
So the denominator in Eq. (\ref{eq:distance saving of each seeker state}) gives the pairing rate involving passengers in seeker-state $s(w_p)$, and the numerator gives the mean total distance saving involving passengers in the seeker-state $s(w_p)$.

\subsection{Forward-looking utility function design based on the predicted distance saving}\label{subsec:forwad-looking matching strategy}
Given the expected distance saving $\bar e(p)$ and $\bar e(0,p)$in Eqs. (\ref{eq:distance saving_taker state}) and (\ref{eq:e(0,p)}), it is intuitive to set the utility function $u'(v,p)$ into:
\begin{align}\label{eq: combined utility}
  u'(v,p)=\left\{\begin{matrix}
\Bar{e}(p)-l^{pk}(v,o_p),  &v\in \vV_{0,p}\\ 
e(v_{p'},p)-l^{pk}(v_{p'},o_p),  &v\in \vV_{1,p}\\
\Bar e(0,p)-\bar l, &v\in \left\{0\right\}\\ 
-\infty,&v\notin \vV_0\cup\vV_1\cup\left\{0\right\}.
\end{matrix}\right.
\end{align}
where $\bar l$ is the average pick-up distance which can be estimated based on historical data.

To examine whether Eq. (\ref{eq: combined utility}) is a qualified utility function, we propose the following three basic properties that a qualified dynamic vehicle dispatching strategy should meet: 

\textbf{Property 1}. If two passengers $p_1$ and $p_2$ can both match with a vehicle $v\in \vV_{p_1}\cap\vV_{p_2}$ and yield the same distance saving and pick-up distance, then the platform gives matching priority to the passenger who has been waiting longer;

\textbf{Property 2}. If two passengers $p_1$ and $p_2$ who have been waiting for the same rounds of matching and both can be matched with vehicle $v$ under the same pick-up distance, then the platform gives matching priority to the passenger-vehicle matching pair with longer distance saving;

\textbf{Property 3}. When passengers' waiting time at the origin exceeds $K/\Delta t$ rounds of matching, the utility of matching them with any available vehicles exceeds the utility of keeping them waiting at the origin. 

Since the utility $u'(v,p),v\in \vV_{0,p}\cup\vV_{1,p}$ does not change with passenger waiting time, Eq. (\ref{eq: combined utility}) fails to meet Property 1. So we revise Eq. (\ref{eq: combined utility}) into the following form: 
\begin{align}\label{eq: balanced utility}
  u^*(v,p)=\left\{\begin{matrix}
\Bar{e}(p) [\frac{\Bar{e}(p)}{\Bar{e}(p)+l^{pk}(v,o_p)}]\alpha^{k_p}, \quad v\in \vV_{0,p}\\ 
e(v_{p'},p) [\frac{e(v_{p'},p)}{e(v_{p'},p)+l^{pk}(v,o_p)}]\alpha^{k_p}, \quad v\in \vV_{1,p}\\
\Bar{e}(0,p)-\bar l, \quad v\in \left\{0\right\}\\ 
-\infty, \quad v\notin \vV_0\cup\vV_1\cup\left\{0\right\},
\end{matrix}\right.
\end{align}
where $k_p$ is the number of matching rounds that the platform has kept passenger $p$ waiting at the origin, and $\alpha$ is a positive parameter introduced to adjust the priority of each match considering passenger's waiting time. To avoid negative utilities when multiplying with the waiting time coefficient $\alpha^{k_p}$, and meanwhile to ensure that the platform prioritizes passengers with a lower pick-up distance, we reformulate the net distance saving in Eq. (\ref{eq: combined utility}) into a fraction format. 





As shown in the following proposition, the revised utility function $u^*(v,p)$ in Eq. (\ref{eq: balanced utility}) satisfies all of the above-mentioned three basic properties when  $\alpha>1$. 

\begin{prop} Suppose $\alpha>1$, then the above three basic properties of the dynamic ride-pooling matching algorithm can be guaranteed by the revised utility function $u^*(v,p)$. \label{prop1}\end{prop}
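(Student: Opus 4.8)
The plan is to verify each of the three properties in turn, after first observing that for any matched vehicle---vacant or partially occupied---the utility in Eq.~(\ref{eq: balanced utility}) has the common multiplicative form $u^*(v,p) = f_L(e)\,\alpha^{k_p}$, where $e$ denotes the relevant distance saving ($\bar e(p)$ in the vacant case or $e(v_{p'},p)$ in the partially occupied case), $L = l^{pk}(\cdot,o_p)$ is the pick-up distance, and $f_L(e) := e^2/(e+L)$. This factorization cleanly decouples the dependence on waiting time from the dependence on distance saving and pick-up distance, so that each property reduces to a one-variable monotonicity statement.

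For Property~1, I would fix a vehicle $v$ and two passengers with equal distance saving and equal pick-up distance, so that the factor $f_L(e)$ is identical for both. The two utilities then differ only through $\alpha^{k_p}$; since $\alpha>1$, the map $k \mapsto \alpha^k$ is strictly increasing, so the passenger with the larger $k_p$ (the one who has waited longer) receives the strictly larger utility, which is exactly the claimed priority. For Property~2, I would instead fix equal waiting rounds and equal pick-up distance $L$, reducing the comparison to the monotonicity of $f_L$ in $e$. A short computation gives $f_L'(e) = e(e+2L)/(e+L)^2$, which is strictly positive for $e>0$ and $L>0$; hence $f_L$ is strictly increasing and the pair with the longer distance saving wins. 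The same argument applies verbatim whether $v$ is vacant or partially occupied, since the functional form is identical in both relevant rows of Eq.~(\ref{eq: balanced utility}).

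For Property~3, I would examine $\bar e(0,p)$ from Eq.~(\ref{eq:e(0,p)}) at the boundary $k_p = K/\Delta t$. There the exponent $K/\Delta t - k_p$ vanishes, so the leading factor equals $1-(1-r_w)^{0} = 0$ and therefore $\bar e(0,p)=0$; consequently the waiting utility is $\bar e(0,p)-\bar l = -\bar l < 0$. On the other hand, every matching utility $f_L(e)\,\alpha^{k_p}$ is non-negative, being a product of non-negative factors with $\alpha^{k_p}>0$, so matching strictly dominates waiting. For $k_p>K/\Delta t$ the exponent is negative and, since $0<r_w<1$, the factor $1-(1-r_w)^{K/\Delta t-k_p}$ turns negative, pushing $\bar e(0,p)-\bar l$ even lower, so the inequality only strengthens.

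The calculus involved is routine, so I expect the genuine obstacle to be the bookkeeping on the positivity and domain assumptions that make the ``priority'' claims strict rather than merely weak: one needs $\bar e(p)>0$ (so that $f_L$ is strictly increasing and not flat at its value $f_L(0)=0$), $\bar l>0$, and $0<r_w<1$. I would state these explicitly at the outset, flag the degenerate case $e=0$ where the two utilities tie and no strict priority is forced, and confirm that under the natural modeling assumptions of the paper these edge cases do not arise in the comparisons the three properties describe.
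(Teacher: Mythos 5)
Your proposal is correct and follows essentially the same route as the paper's own proof: for Property 1 the factor $\alpha^{k_p}$ with $\alpha>1$, for Property 2 the monotonicity of $e \mapsto e^{2}/(e+L)$ at fixed pick-up distance, and for Property 3 the vanishing of $\bar e(0,p)$ as $k_p$ reaches $K/\Delta t$ against the positive matching utilities. Your write-up is marginally tighter in two spots --- you verify $f_L'(e) = e(e+2L)/(e+L)^{2} > 0$ explicitly, and you evaluate Property 3 directly at and beyond the boundary $k_p = K/\Delta t$ (keeping the $-\bar l$ term and flagging the degenerate case $e=0$), where the paper argues via a limit and the existence of a crossing point $k^{*}$ --- but the underlying argument is the same.
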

\begin{proof}
    See Appendix \ref{appendix:proof of proposition 1}.
\end{proof}

With the revised utility function given by Eq. (\ref{eq: balanced utility}), a forward-looking dispatching strategy can be derived by solving the following bipartite matching problem:
\begin{equation}
\label{eq: forward-looking formulation}
\begin{aligned}
\max_{\mathbf{x}}&\sum_{p\in \vP}\sum_{v\in \vV_p} u^*(v,p)x_{v,p}\\
 \text{s.t.}  &\sum_{v\in \textbf{V}_p} x_{v,p} = 1, \forall{p\in\vP}\\
  &\sum_{p\in \textbf{P}_v}x_{v,p} \leq 1, \forall{v\in \vV_0\cup\vV_1}\\
  &x_{v,p}\in\{0,1\}, \forall{v\in \vV,p\in \vP}
\end{aligned}
\end{equation}

\section{Numerical Studies}\label{sec: numerical study}
In this section, a simulation environment is developed to examine the performance of the proposed forward-looking (FL) matching strategy. To demonstrate its advantages, a comparison is made between the performance of the proposed FL strategy and the following baseline strategies.

- \textbf{Non-pooling matching strategy (NP)}: In this scenario, all passengers opt for non-pooling mode, meaning that the platform matches passengers exclusively with vacant vehicles.

- \textbf{Myopic batch matching strategy (MB)}: Under this baseline, the platform utilizes a myopic batch matching strategy, as given in Section \ref{subsec: A myopic vehicle dispatching strategy}. The utility of each passenger-vehicle matching pair is set according to Eq. (\ref{eq:utility_myopic}).

-\textbf{Request-Trip-Vehicle graph-based strategy (RTV)}: This is a state-of-the-art matching strategy proposed by \cite{doi:10.1073/pnas.1611675114}, which has demonstrated excellent performance in experiments on the road network in New York. To make a fair comparison, we modify the objective of the assignment problem (Step D in \cite{doi:10.1073/pnas.1611675114}) in the RTV strategy into maximizing the total distance saving. 

-\textbf{Forward-looking matching strategy without intended delay-matching (FL-no\_delay)}: In this strategy, passengers would not be intentionally kept waiting at the origin for better pairing opportunities in the future and the utility of each passenger-vehicle matching pair is set according to the following equation:
\begin{align}\label{eq: no delay}
  \hat{u}(v,p)=\left\{\begin{matrix}
\Bar{e}(p)-l^{pk}(v,o_p),   &v\in \vV_0\\ 
e(v_{p'},p)-l^{pk}(v_{p'},o_p),   &v\in \vV_1\\
-\infty, &v\notin \vV_0\cup\vV_1
\end{matrix}\right.
\end{align}
where $\Bar{e}(p)$ is defined in Eq. (\ref{eq:distance saving_taker state}). 

\textbf{The oracle model (Oracle)}: In the oracle model, all ride-pooling trip orders over the whole study periods are assumed to be known in advance. So it serves as a theoretical upper bound for ride-pooling efficiency. By comparing the results obtained from the oracle model with those of the FL strategy, the competitive ratio can be evaluated, describing how close the performance of the proposed FL strategy is to the optimal solution. In this study, the optimal matching results in the oracle model is obtained by solving the following linear programming problem:

\begin{equation}
\label{eq: oracle model}
  \max_{\textbf{x}} \sum_{i\in \mathbf{P}^T}\sum_{j\in \mathbf{P}^T}  e(i,j) x_{i,j}
\end{equation}
\begin{align}
 \text{s.t.}  &\sum_{j\in \mathbf{P}^T} x_{i,j} = 1, \forall{i\in \mathbf{P}^T} \label{eq: i_cons}\\
  &\sum_{i\in \mathbf{P}^T} x_{i,j} = 1, \forall{j\in \mathbf{P}^T} \label{eq: j_cons}\\
  & x_{i,j} = x_{j,i}, \forall{i,j\in \mathbf{P}^T} \label{eq:flow_cons}\\
  & x_{i,j}\in\{0,1\}, \forall{i,j\in \mathbf{P}^T}.\label{eq:integer-cons-1}
\end{align}
where $\mathbf{P}^T$ is the set of all ride-pooling orders over the simulation period $[1,T]$. The objective function (\ref{eq: oracle model}) aims to maximize the total distance saving. The binary decision variable $x_{i,j}$ is 1 if passenger $i\in \mathbf{P}^T$ is paired with passenger $j\in \mathbf{P}^T$, and 0 otherwise. The resulting distance saving $e(i,j)$ is calculated following Eq. (\ref{eq:distance saving}), and it is important to note that $e(i,i)=0$. The constraints Eq. (\ref{eq: i_cons}) and Eq. (\ref{eq: j_cons}) ensure that every passenger is matched to at most one another passenger, and constraint Eq. (\ref{eq:flow_cons}) ensures that if passengers $i$ is matched with passenger $j$, then passenger $j$ is also matched with passenger $i$. In this study, the integer programming problem is solved using Gurobi.

In the following sections, the setup of simulation experiments will be introduced, followed by the presentation of the simulation results.

\subsection{Setup of the simulation experiments}\label{subsec: Setup of the simulation experiments}
The simulation experiments use an open dataset of ride-hailing orders in Haikou, a city in China, provided by Didi Chuxing. The dataset includes ride-hailing orders recorded from May 1 to 21, 2017.
The experiments simulate the occurrence, matching, and movements of these orders on the road network of Haikou, China.

The simulation adopts a default matching time interval of 10 seconds ($\Delta t=10$s) and we set the number of drivers according to a default order-driver ratio of 100:25. 
The initial locations of drivers on the road network are randomly generated.
All drivers are assumed to choose the shortest path traveling between any two nodes on the network, with a constant speed 30km/hr. 
After dropping off all on-board passengers, a fully or partially occupied vehicle becomes vacant and randomly travels through the road network until it is assigned to a passenger. 
Passengers' maximum waiting time for response follows a normal distribution with a mean of 90s and a standard deviation of 10s. 
The value of parameter $\alpha$ in Eq. (\ref{eq: balanced utility}) and the presumed response rate $r_w$ in Eq. (\ref{eq:e(0,p)}) are set to be 1.01 and 0.75, respectively. 
The pairing conditions are set to be $\Bar{R}=3000$m, $\Bar{D}=3000$m respectively.
In this study, we assume the objective of the dispatching problem is to maximize total distance saving, so the platform's pricing strategy is not considered to impact the matching strategy. 
Nevertheless, to give a rough idea of the impacts of the proposed FL matching strategy on the platform's profit, we set the platform's pricing strategy as follows: passengers are charged 5 Yuan/km for solo-ride distance, and 3.5 Yuan/km for shared distance; drivers are paid 2 Yuan/km for every occupied driving distance.

The FL strategy is implemented in the following procedure. First, to conduct the offline prediction of expected distance saving according to the method introduced Section \ref{sec:forward-looking matching strategy}, we first statistically calculate the average demand rate of each OD pair for each unit study period (that is, one hour in our experiment) based on the real-world order dataset recorded between May 1 to 21, 2017.
These average demand rates for all OD pairs are then input into the system of nonlinear equations (\ref{eq:p_sw})-(\ref{eq:lambda_t(a,w)}) to obtain the passenger pairing probabilities in each seeker- and taker-state. And we store the solutions of Eqs.  (\ref{eq:p_sw})-(\ref{eq:lambda_t(a,w)}) for online matching. 
In the real-time matching problem, by retrieving the stored predicted information, we calculate the expected distance saving $\bar e(p)$ and $\bar e(0,p)$ according to Eqs. (\ref{eq:distance saving_taker state}) and (\ref{eq:e(0,p)}) for each passenger $p$. And finally, we solve the bipartite matching problem (\ref{eq: forward-looking formulation}) with utility of each match set according to Eq. (\ref{eq: balanced utility}) to generate the FL matching result. 

\subsection{Simulation results}\label{subsec: Simulation results}

Under the above setup, Table \ref{tab2: Performance of different models in system metrics} provides a summary of the simulation results for the proposed strategy and baseline strategies using the ride-hailing orders between 8:00 am and 9:00 am on May 1st, 2017 in the open dataset of Haikou.
As observed from the table, in comparison with the non-pooling (NP) strategy, pooling passengers together by any of the four matching strategies (i.e., MB, RTV, FL-no\_delay and FL) can lead to a higher order response rate. 

Among the four ride-pooling matching strategies (MB, RTV, FL-no\_delay, and FL), the RTV strategy proposed by \cite{doi:10.1073/pnas.1611675114} outperforms the MB strategy in terms of response rate, total distance saving and passenger detour distance, but under-performs the FL-no\_delay and FL strategy in terms of both average detour distance and total distance saving. 
In comparison with the RTV strategy, the proposed FL strategy (FL-no\_delay strategy) can generate 31.7\% (14.2\%) more total distance saving and reduces the average passenger detour distance by 18\% (5.4\%). 
Furthermore, such longer total distance saving is not achieved by serving more orders. 
As shown in Table \ref{tab2: Performance of different models in system metrics}, the response rates under both the FL and FL-no\_delay strategies are smaller than those under the MB and RTV strategies. 
This highlights the effectiveness of the proposed utility function in forfeiting low-quality matches and prompting high-quality matches.  

It is worthwhile to emphasize that the lower response rate to dynamic ridepooling orders may not be a disadvantage for existing ride-hailing platforms. 
Although fewer orders are responded, the FL strategy yields the highest total distance saving and lowest detour distance for passengers, by deliberating choosing orders with high matching potentials to respond. 
Most existing ride-hailing platforms, e.g., Didi and Uber, operate both ride-pooling and non-pooling services, which compete with each other for vehicle supply. 
The matching results generated by the FL strategy for ride-pooling service could help platforms to better allocate their supply resources and ensures that ride-pooling is encouraged only for orders that have high pairing potentials. 

To further assess the effectiveness of the intended delay-matching in the FL strategy, a comparison is made between the FL strategy and the FL-no\_delay strategy. 
The results presented in Table \ref{tab2: Performance of different models in system metrics} show that implementing the FL-no\_delay strategy results in a higher response rate (0.9 v.s. 0.81) and a lower average response time (6s v.s. 30s).
But the pairing ratio is reduced to 0.18, the total distance saving decreases to 137km, and the average detour distance is increased to 855m. 
These findings indicate that the intended delay-matching in the FL strategy improves matching quality, but at the cost of longer waiting time and lower response rate.

In Table \ref{tab2: Performance of different models in system metrics}, Oracle-1 denotes the oracle model described in Section \ref{sec: numerical study}, while Oracle-2 indicates another oracle model that aims to maximize the number of successfully paired passengers. 
As shown in Table \ref{tab2: Performance of different models in system metrics}, the maximal total distance saving that can be achieved in Oracle-1 is 250km, and the proposed FL strategy can achieve 63.2\% of the optimal total distance saving under Oracle-1, with the resulting average detour distance being 211m less than that under the Oracle-1 model. 
With an objective of maximizing the total number of successfully paired passengers, Oracle-2 model achieves the maximal paring ratio 82\%, but the resulting average detour distance is 297m longer than that under Oracle-1 model, and the total distance saving (50km) is the lowest among all the presented matching strategies. 
This highlights the fact that not all passengers are suitable candidates for ride-pooling services. 
Instead of responding and paring as many orders as possible, filtering orders with poor matching potentials through the proposed FL strategy could lead to simultaneous and significant improvements of both system performance and user experience. 

\begin{table*}[]
\caption{Performance of different vehicle dispatching strategies}
\label{tab2: Performance of different models in system metrics}
\centering
\resizebox{\textwidth}{!}{%
\begin{tabular}{cccccccl}
\hline
Metrics                                                                                                        & NP   & MB   & RTV    & FL   & FL-no\_delay & Oracle-1 & Oracle-2 \\ \hline
Response rate                                                                                                  & 0.74 & 0.9  & 0.94   & 0.81 & 0.9          & 0.81  & 0.84    \\
Pairing ratio                                                                                                  & /    & 0.2  & 0.2    & 0.27 & 0.18          & 0.66  & 0.82  \\
Platform profit                                                                                                & 7232 & 4572 & 4633   & 4438 & 4529         & 5205  & 4830  \\
\begin{tabular}[c]{@{}c@{}}Average response time,\\  $avg\_resp\_time$ (s)\end{tabular}                        & 16   & 5    & 5      & 30   & 6            & 5     & 5     \\
\begin{tabular}[c]{@{}c@{}}Average pick-up time, \\ $avg\_pk\_time$ (s)\end{tabular}                           & 168  & 107  & 122    & 111  & 103          & 174   & 197   \\
\begin{tabular}[c]{@{}c@{}}Average detour distance,\\  $avg\_detour$(m)\end{tabular}                          & /    & 912  & 892   & 731  & 855          & 942   & 1239  \\
\begin{tabular}[c]{@{}c@{}}Average shared distance, \\ $avg\_share$(m)\end{tabular}                           & /    & 4820 & 4773   & 4188 & 4913         & 5562  & 2716  \\
\begin{tabular}[c]{@{}c@{}}Total driving distance with passenger(s) \\on board, $dist\_total$ (km)\end{tabular} & 3333 & 3796 & 3984   & 3310 & 3772         & 3499  & 3561  \\
\begin{tabular}[c]{@{}c@{}}Total distance saving  due to\\ ride-pooling, $dist\_save$ (km)\end{tabular}          & /    & 111  & 120    & 158  & 137          & 250   & 51  \\ \hline
\end{tabular}%
}
\end{table*}

\begin{figure}[!t]
    \centering
    \includegraphics[width=3in]{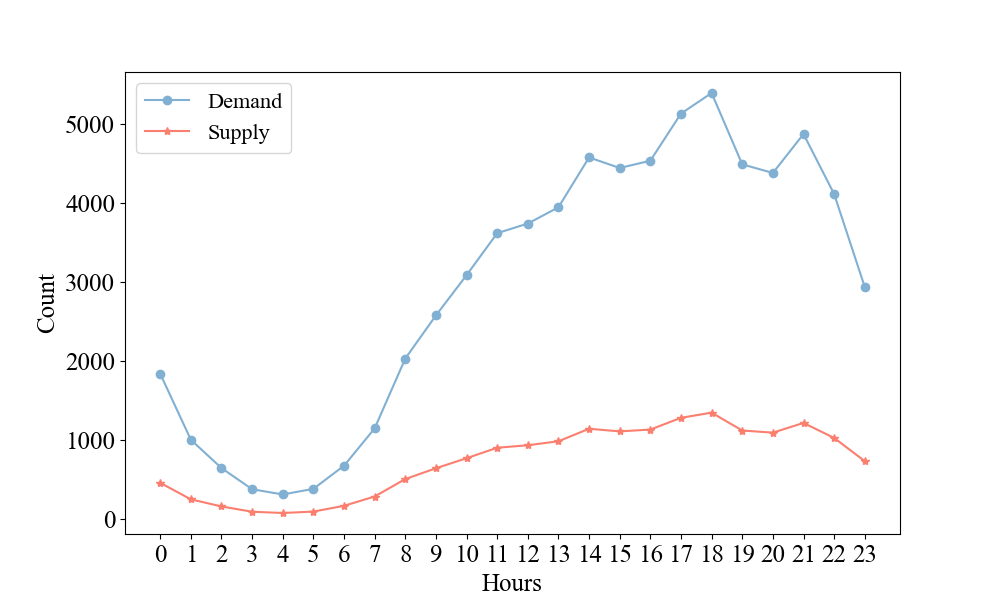}%
    
    \caption{Variation of demand and vehicle supply in one day}
    \label{figure: The quantity of demand-supply}
\end{figure}

For the MB, RTV, and FL strategies, this study further simulates their performances against the real-world orders from 0:00 to 24:00 on 1st May, 2017. 
The variation of demand and vehicle supply over the day are presented in Figure \ref{figure: The quantity of demand-supply}, while the corresponding simulation results are depicted in Figure \ref{figure: Comparison of different method in different simulation hours}.
As we can see from Figure \ref{figure: The quantity of demand-supply}, the ride-hailing demand increases quickly from 5:00 to 10:00, and then keeps at a high level from 10:00 to 24:00.
Under all the three strategies, the variation trend of the total distance saving is consistent with the variation of ride-hailing demand, as shown in Figure \ref{figure: Comparison of different method in different simulation hours} (c). The average shared distance shown in Figure \ref{figure: Comparison of different method in different simulation hours} (a) peaks at 6 AM when the trip demand is relatively low\footnote{This is because a large proportion of trips during this period are destined for the airport, and most airport trips exceed 10km.}.
Furthermore, as we can see from Figure \ref{figure: Comparison of different method in different simulation hours}, the FL strategy always outperforms the RTV and MB strategies in terms of total distance saving. The magnitude of this increase is particularly notable when there is a high demand and a large demand-supply ratio from 10:00 to 24:00. During that period, the total distance saving generated by the FL strategy is almost twice that of the RTV strategy.
Meanwhile, the detour distance of the FL strategy is also the lowest during the simulation period, as shown in Figure \ref{figure: Comparison of different method in different simulation hours} (b).
Despite the longer total distance saving, the response rate is lowest in the FL strategy, as shown in Figure \ref{figure: Comparison of different method in different simulation hours} (d). 
While the lower response rate could be considered a disadvantage of the proposed strategy when applied to a platform with solely ride-pooling service, it becomes an advantage when adopted by platforms such as Didi and Uber, which operate both ride-pooling and non-pooling services simultaneously.
The significantly longer total distance saving achieved through serving much less passengers highlights the accuracy of the proposed strategy in identifying orders with high pairing potentials.

\begin{figure*}[!t]
\centering
\subfloat[Average shared distance for passengers]{\includegraphics[width=2.5in]{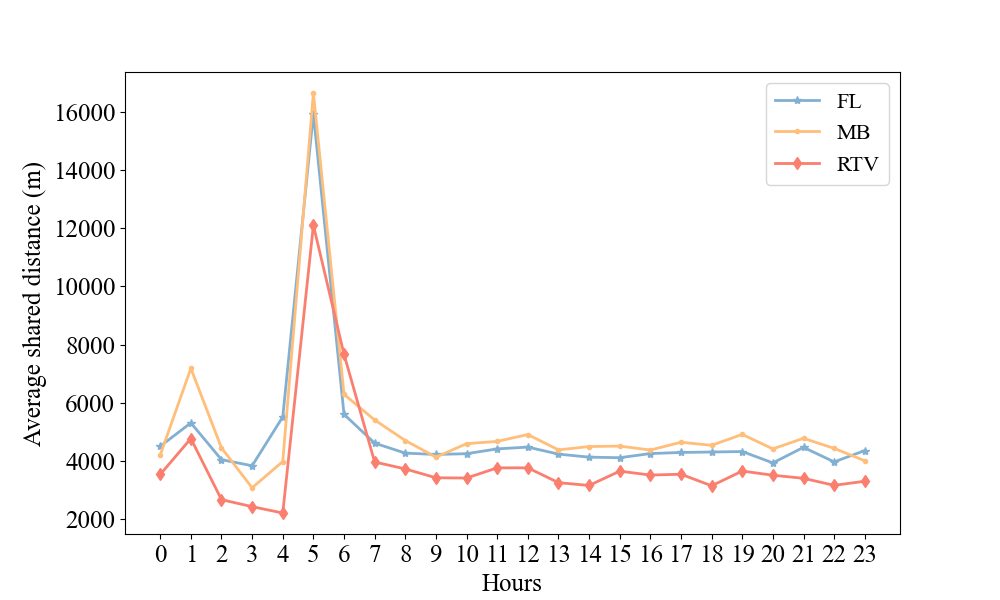}%
}
\hfil
\subfloat[Average detour distance for passengers]{\includegraphics[width=2.5in]{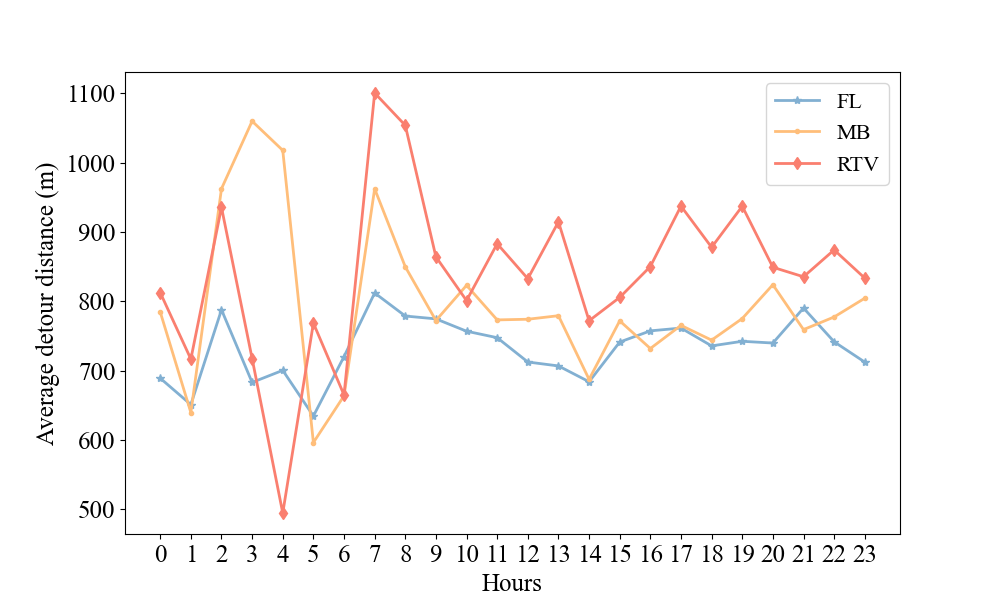}%
}
\quad
\subfloat[Total distance saving due to ride-pooling]{\includegraphics[width=2.5in]{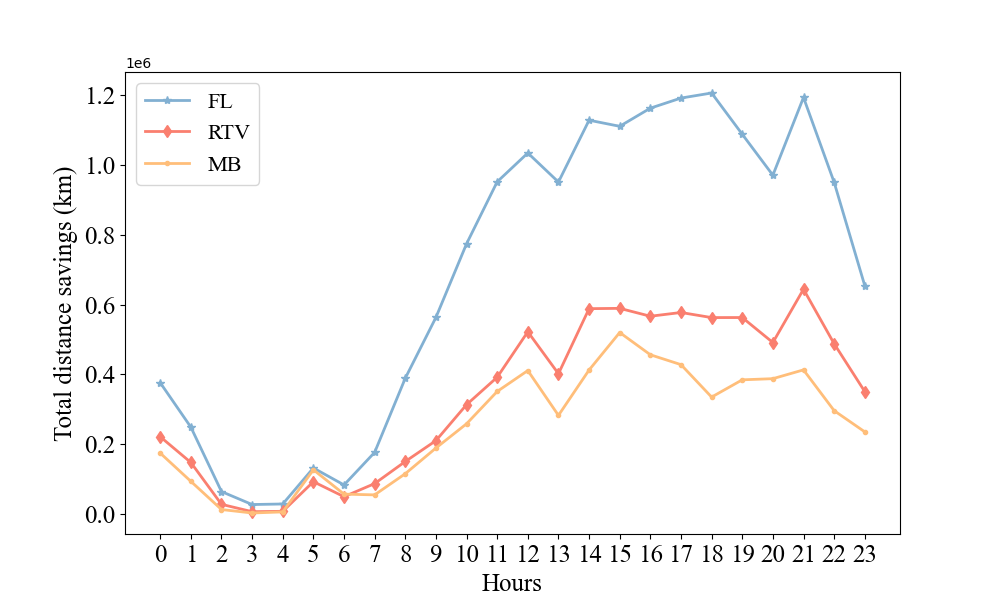}%
}
\hfil
\subfloat[Response rate]{\includegraphics[width=2.5in]{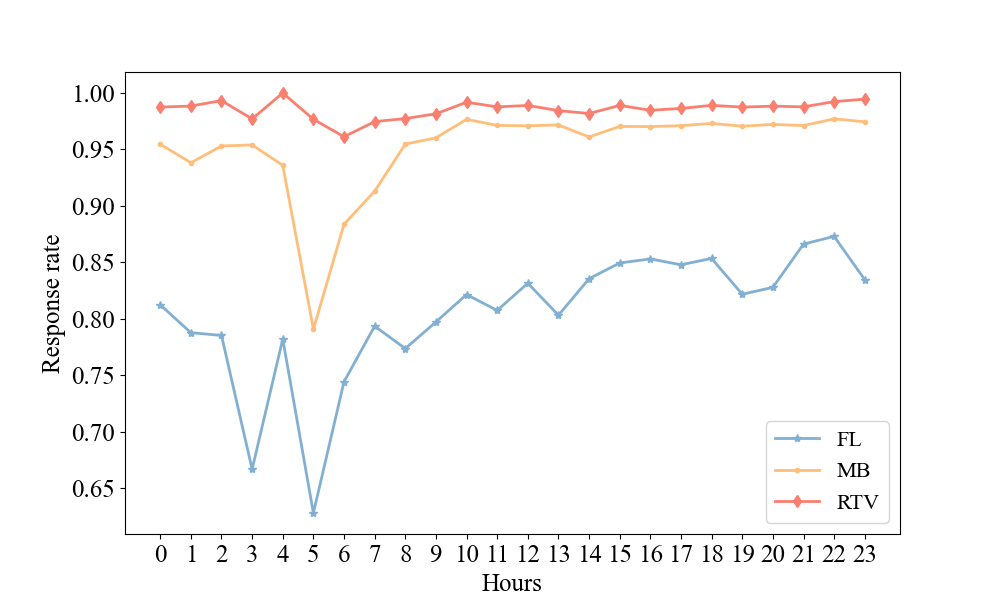}%
}
    \caption{Comparison of the performance under the MB, RTV and FL strategies in different hours in the Haikou experiment.}
    \label{figure: Comparison of different method in different simulation hours}
\end{figure*}

\subsection{Sensitivity analysis}\label{subsec:What-if analysis}

In this subsection, we further conduct sensitivity analysis to assess the performance of the FL strategy under diverse scenarios. 

\subsubsection{Influence of the mean of passengers' maximum waiting time}\label{subsubsec: Influence of the average response time}
The ride-pooling service quality is sensitive to passengers' maximum waiting time, particularly for low demand density \citep{stiglic2016making}. Commonly used maximum waiting time in the literature
include 300s, 600s, 900s, and so on \citep{bilali2020analytical,engelhardt2019quantifying,zwick2021ride}. Table \ref{tab:Sensitive analysis of parameter K} presents the sensitivity analysis results of dispatching performance under different matching strategies as the parameter $K$ varies from 60s to 600s. 

As we can see from the table, as passengers' maximum waiting time $K$ increases, both the response rate and the average response time increase, especially under our proposed FL strategy. When the maximum passenger waiting time exceeds 240s (4min), the difference between the response rates under our FL strategy and the other two strategies become less than 5\%. On the other hand, in terms of total distance saving, the proposed FL strategy keeps outperforming both the MB and the RTV strategies under all values of $K$. 
Both the MB strategy and the RTV strategy are not sensitive to passengers' maximum waiting time since they always prompt passenger-vehicle matching. 
However, as $K$ increases from 60s to 600s, the response rate of the proposed FL strategy increases from 0.77 to 0.92, the pairing ratio increases from 0.24 to 0.31, and the total distance saving increases from 150km to 207km, while the average detour distance decreases from 776m to 728m.
In conclusion, when the average maximal waiting time for passengers is longer, the superiority of our FL strategy in generating longer total distance saving and a higher response rate would be more obvious. 

\begin{table*}[]
\centering
\caption{Sensitive analysis of parameter $K$}
\label{tab:Sensitive analysis of parameter K}
\resizebox{\textwidth}{!}{%
\begin{tabular}{c|ccc|ccc|ccc|ccc|ccc}
\hline
\multirow{2}{*}{$K$} & \multicolumn{3}{c|}{$avg\_detour\_dist$(m)} & \multicolumn{3}{c|}{$dist\_save$(km)} & \multicolumn{3}{c|}{$avg\_resp\_time$(s)} & \multicolumn{3}{c|}{Response rate} & \multicolumn{3}{c}{Pairing ratio} \\
                     & MB            & RTV          & FL           & MB          & RTV        & FL         & MB          & RTV          & FL           & MB         & RTV       & FL        & MB        & RTV       & FL        \\ \hline
60                   & 879           & 922          & 776          & 108         & 118        & 150        & 5           & 5            & 25           & 0.88       & 0.93      & 0.77      & 0.19      & 0.13      & 0.24      \\
90                   & 912           & 894          & 731          & 110         & 120        & 158        & 6           & 5            & 30           & 0.9        & 0.94      & 0.81      & 0.2       & 0.14      & 0.27      \\
120                  & 889           & 821          & 793          & 115         & 150        & 167        & 6           & 6            & 33           & 0.91       & 0.94      & 0.86      & 0.2       & 0.14      & 0.28      \\
180                  & 896           & 821          & 765          & 117         & 150        & 193        & 7           & 6            & 37           & 0.91       & 0.94      & 0.89      & 0.18      & 0.14      & 0.29      \\
240                  & 867           & 821          & 756          & 125         & 150        & 198        & 9           & 6            & 42          & 0.92       & 0.94      & 0.90      & 0.18      & 0.14      & 0.28      \\
300                  & 906           & 834          & 751          & 134         & 155        & 200        & 11          & 7            & 44          & 0.93       & 0.95      & 0.90      & 0.2       & 0.15      & 0.30      \\
400                  & 827           & 834          & 749          & 139         & 155        & 207        & 11          & 7            & 47          & 0.93       & 0.95      & 0.92      & 0.2       & 0.15      & 0.31      \\
500                  & 874           & 834          & 763          & 143         & 155        & 207        & 14          & 7            & 51          & 0.93       & 0.95      & 0.92      & 0.19      & 0.15      & 0.31      \\
600                  & 883           & 834          & 728          & 147         & 155        & 207        & 15          & 7            & 54          & 0.93       & 0.95      & 0.92      & 0.18      & 0.15      & 0.31      \\ \hline
\end{tabular}%
}
\end{table*} 

\subsubsection{Influence of the presumed response rate}\label{subsubsec: Influence of the average response rate}
In Eq. (\ref{eq:e(0,p)}), the utility of keeping passenger waiting is influenced by a hyper parameter: the presumed response rate $r_w$. 
As response rate varies with demand-supply conditions as well as the platform's matching strategy, setting an accurate parameter $r_w$ in Eq. (\ref{eq:e(0,p)}) is difficult.
To examine the impact of the presumed response rate $r_w$ on the performance of the FL strategy, Table \ref{tab: Sensitive analysis of parameter r_w} presents the results when $r_w$ increases from 0.1 to 0.9, with other parameters remaining the same as described in subsection \ref{subsec: Setup of the simulation experiments}.
The simulation results demonstrate that the proposed FL strategy is not very sensitive to the presumed response rate $r_w$. When $r_w$ changes from 0.1 to 0.9, the actual response rate changes only slightly from 0.87 to 0.82, and the pairing ratio changes slightly from 0.25 to 0.26. 
The change in total distance saving is less than 2.5\% (from 160km to 156km), and the change in average detour distance is less than 3.4\%.

\begin{table*}[]
\caption{Sensitive analysis of parameter $r_w$}
\label{tab: Sensitive analysis of parameter r_w}
\centering
\resizebox{\textwidth}{!}{%
\begin{tabular}{ccccccc}
\hline
$r_w$ & $avg\_share\_dist$(m) & $avg\_detour\_dist$(m) & $dist\_save$(km) & $avg\_resp\_time$(s) & Response rate & Pairing ratio \\ \hline
0.1   & 4539                  & 773                    & 160              & 21                   & 0.87          & 0.25          \\
0.2   & 4623                  & 772                    & 162              & 27                   & 0.86          & 0.27          \\
0.3   & 4660                  & 733                    & 168              & 29                   & 0.85           & 0.26          \\
0.4   & 4034                  & 764                    & 158              & 30                   & 0.84          & 0.26          \\
0.5   & 4444                  & 778                    & 167              & 31                   & 0.84          & 0.29          \\
0.6   & 4206                  & 786                    & 165              & 31                   & 0.83          & 0.29          \\
0.7   & 4307                  & 776                    & 162              & 30                   & 0.83          & 0.27          \\
0.8   & 4188                  & 731                    & 158              & 30                   & 0.81          & 0.27          \\
0.9   & 4280                  & 746                    & 156              & 30                   & 0.82          & 0.26          \\ \hline
\end{tabular}%
}
\end{table*}

\subsubsection{Influence of the matching time interval}\label{subsubsec: Influence of the matching time interval}
The matching time interval $\Delta t$ is another hyper-parameter in the dynamic ride-pooling dispatching problem. 
As pointed out in \cite{yang2020optimizing}, the matching time interval has an obvious impact on the dispatching performance. 
So we conduct a sensitivity analysis when $\Delta =2s,10s,30s$ and $60s$. 
The simulation results, presented in Table \ref{tab: Sensitive analysis of parameter Delta t}, show that as the matching time interval $\Delta t$ increases, all the three matching strategies show improvements in total distance saving and the ride-pooling pairing ratio, together with increases in average response time and average passenger detour. 
Regardless of the value for $\Delta t$, the proposed FL strategy consistently outperforms both the MB and RTV strategies in total distance saving and average passenger detour distance, but the degree of superiority decreases with longer $\Delta t$. 
This is not surprising, as the benefits of being forward-looking becomes less significant with more matching opportunities being encountered within a longer matching time interval $\Delta t$.

\begin{table*}[]
\centering
\caption{Sensitive analysis of parameter $\Delta t$}
\label{tab: Sensitive analysis of parameter Delta t}
\resizebox{\textwidth}{!}{%
\begin{tabular}{c|ccc|ccc|ccc|ccc|ccc}
\hline
\multirow{2}{*}{$\Delta t$} & \multicolumn{3}{c|}{$avg\_detour\_dist$(m)} & \multicolumn{3}{c|}{$dist\_save$(km)} & \multicolumn{3}{c|}{$avg\_resp\_time$(s)} & \multicolumn{3}{c|}{Response rate} & \multicolumn{3}{c}{Pairing ratio} \\
                           & MB            & RTV          & FL           & MB          & RTV        & FL         & MB           & RTV          & FL          & MB         & RTV       & FL        & MB        & RTV       & FL        \\ \hline
2                          & 820           & 756          & 714          & 108         & 114        & 141        & 2            & 1            & 21          & 0.91       & 0.95      & 0.85      & 0.14      & 0.11      & 0.20      \\
10                         & 912           & 894          & 731          & 111         & 120        & 158        & 5            & 5            & 30          & 0.9        & 0.94      & 0.81      & 0.2       & 0.2       & 0.27      \\
30                         & 966           & 842          & 795          & 223         & 264        & 309        & 16           & 16           & 47          & 0.88       & 0.94      & 0.78      & 0.34      & 0.25      & 0.45      \\
60                         & 922           & 838          & 803          & 323         & 335        & 338        & 31           & 31           & 70          & 0.84       & 0.92      & 0.72      & 0.47      & 0.41      & 0.57      \\ \hline
\end{tabular}%
}
\end{table*}

\subsubsection{Influence of the order-driver ratio}\label{subsubsec: Influence of the order-driver ratio}
In Section \ref{subsec: Setup of the simulation experiments}, the order driver ratio for the simulation is set as 100:25.
In this subsection, we further assess the performance of the proposed strategy in comparison the MB, RTV, and FL matching strategies under a higher order-driver ratio (100:10) scenario.
The results are presented in Table \ref{tab6: Influence of different order-driver ratio}.
The FL strategy consistently outperforms other strategies in terms of total distance saving and average passenger detour, regardless of the driver supply.

\begin{table*}[]
\caption{Influence of different order-driver ratio}
\label{tab6: Influence of different order-driver ratio}
\resizebox{\textwidth}{!}{%
\begin{tabular}{llllllll}
\hline
Order-driver ratio      & Strategy & $avg\_detour\_dist$(m) & $dist\_save$ (km) & $avg\_resp\_time$ (s) & Platform profit & Response rate & Pairing ratio \\ \hline
\multirow{3}{*}{100:10} & MB     & 913                     & 109              & 13                    & 3781            & 0.71          & 0.36          \\
                        & RTV & 923                     & 118              & 11                    & 4128            & 0.86          & 0.35          \\
                        & FL     & 756                     & 148              & 37                    & 3616            & 0.69          & 0.32          \\ \hline
\multirow{3}{*}{100:25} & MB     & 912                     & 111              & 5                     & 4572            & 0.9           & 0.2           \\
                        & RTV & 894                     & 120              & 5                     & 4633            & 0.94          & 0.2           \\
                        & FL     & 731                     & 158              & 30                    & 4238            & 0.81          & 0.27          \\ \hline
\end{tabular}%
}
\end{table*}

\section{Conclusion}\label{sec: conclusion}
In this study, we propose a forward-looking dispatching strategy to maximize total distance saving through dynamic ride-pooling service. 
Considering the pairing opportunities brought about by future orders, our method first predicts the expected distances saving for passengers if they were kept waiting or assigned to vacant vehicles based on historical demand distributions over the network, and then set the utility of each match according to the prediction to make forward-looking decisions. 
A simulation environment is developed based on the Haikou road network, and the performance of the proposed forward-looking strategy is compared to four benchmark strategies using an open dataset of ride-hailing orders in Haikou, China. 
As shown by simulation results, our proposed strategy generates the highest total distance saving and lowest average passenger detour distance.
Specifically, compared to the state-of-the-art matching strategy proposed by \cite{doi:10.1073/pnas.1611675114}, the proposed strategy leads to 31\% more total distance saving and 18\% less average passenger detour distance.

This study contributes to the development of efficient vehicle dispatching strategies in ride-pooling services, and highlights the benefits of taking future matching opportunities into consideration.
With a reliable prediction of the expected distance saving of different passengers under different matching options, the platform can effectively improve the system performance (in terms of longer total distance saving) as well as user experience (in terms of shorter passenger detour distance) by deliberately giving up some current pairing opportunities, and keeping some passengers waiting a little bit longer. 
Future work could explore the performance of the proposed strategy under varying road network conditions, and investigate ways to further improve the accuracy of the prediction information.

\appendices

\section{The system of nonlinear equations problems in \cite{wang2021predicting} for predicting the pairing probability in each seeker- and taker-states}\label{appendix:The system of nonlinear equations problems}

To ease understanding, we provide a brief introduction of the model that characterize the complex interactions among the five types of variables defined in subsection \ref{subsec:Expected distance saving}.
For a more in-depth discussion, please refer to \cite{wang2021predicting}. 

The model assumes that the platform adopts the first-come-first-serve rule when searching for matching takers of each seeker and prompts a immediate match if it finds a matching taker for the seeker.
Therefore, a seeker $p$ would be immediately paired if there are passengers in one of its matching taker-states $t\left( {a,w} \right) \in {\mathbf{T}_{s(w_p)}}$. 
The pairing probability $p_{s(w_p)}$ in seeker-state $s(w_p)\in \mathbf{S}$ is thus dependent on the probability ${\rho _{t\left( {a,w} \right)}}$ of having at least one takers in state $t\left( {a,w} \right) \in {\mathbf{T}_{s(w_p)}}$ at any moment:
\begin{equation}
\label{eq:p_sw}
p_{s(w_p)} =
\begin{cases}
1- {\textstyle \prod\limits_{t(a,w) \in \mathbf{T}_{s(w_p)}}\left ( 1- \rho_{t(a,w )} \right )},\text{if} \ \mathbf{T}_{s(w_p)}\ne \emptyset
\\0, \text{otherwise}
\end{cases}.
\end{equation}

On the other hand, a passenger $p$ in taker-state $t(a,w_p)$ may be paired if there are pairing opportunities arriving during its stay in state $t(a,w_p)$. 
So $p_{t(a,w_p)}$ is dependent on the aggregate arrival rate of pairing opportunities for passengers in taker-state $t(a,w_p)\in \mathbf{T}$:

%

\begin{equation}\label{eq:p_t(a,w)}
p_{t(a,w_p)} =
\begin{cases}
= 1-\text{exp} \left( - \eta_{t(a,w_p)}\overline{\tau }_{t(a,w_p)}\right),\text{if} \ \eta_{t(a,w_p)} > 0
\\0, \text{if} \ \eta_{t(a,w_p)} = 0
\end{cases},
\end{equation}
where $\eta_{t(a,w_p)}$ is the sum of the arrival rates of pairing opportunities ${\eta _{t\left( {a,w_p} \right)}^{s(w)}}$ from all of its matching seeker-states $s\left( w \right) \in {S_{t(a,w_p)}}$:
\begin{equation}\label{eq:eta_t(a,w)}
\eta _{t(a,w_p)} =
\begin{cases}
\sum\limits_{s(w) \in \mathbf{S}_{t(a,w_p)}} \eta _{t(a,w_p)}^{s(w)},\text{if} \ \mathbf{S}_{t(a,w_p)} \ne \emptyset
\\0, \text{otherwise}
\end{cases}.
\end{equation}

A newly appeared passenger in seeker-state $s(w)$ can be counted as a pairing opportunity for passengers in taker-state $t(a,w_p)$ if there is no passenger in any of the taker-state $t(a',w')\in \mathbf{T}_{s(w)^{\succ t(a,w_p)}}$.
Here $\mathbf{T}_{s(w)^{\succ t(a,w_p)}}$ defines the set of all taker-states that have higher matching priorities with the seeker $s(w)$ in comparison with $t(a,w_p)$, and is predetermined according to the matching quality (i.e., distance saving, pick-up distance, passenger detour distance) of each seeker-taker pair. 
So ${\eta _{t\left( {a,w_p} \right)}^{s(w)}}$ can be approximated by Eq. (\ref{eq:eta_t(a,w)^s(w)}).

\begin{figure*}[!h]
\begin{equation}\label{eq:eta_t(a,w)^s(w)}
\eta _{t(a,w_p)}^{s(w)} =
\begin{cases}
\lambda_{w}, \text{if} \ s(w) \in \mathbf{S}_{t(a,w_p)} \ \text{and} \ \mathbf{T}_{s(w)}^{\succ t(a,w_p)} = \emptyset
\\
\lambda_{w}\prod\limits_{t(a',w') \in \mathbf{T}_{s(w)}^{\succ t(a,w_p)} }\left ( 1-\rho _{t(a',w')} \right ), \text{if} \ s(w) \in \mathbf{S}_{t(a,w_p)} \ \text{and} \ \mathbf{T}_{s(w)}^{\succ t(a,w_p)} \ne \emptyset
\end{cases}.
\end{equation}
\end{figure*}

The probability ${\rho _{t\left( {a,w} \right)}}$ of of having at least one taker in state $t(a,w)$ at any moment depends on the average arrival rates of unpaired passengers and pairing opportunities for taker-state $t(a,w)$, i.e., $\lambda_{t(a,w)}$ and $\eta_{t(a,w)}$, and the expected time that every taker is available for pairing in its taker-state, i.e., $\bar\tau_{t(a,w)}$. 
So $\rho_{t(a,w)}$ follows
\begin{equation}\label{eq:rho_t(a,w)}
\rho_{t(a,w)}
\begin{cases}
\lambda_{t(a,w)}\left [ 1-\text{exp} \left( - \eta_{t(a,w)}\overline{\tau }_{t(a,w)}\right)  \right ], \text{if} \ \eta_{t(a,w)} > 0
\\ 
\lambda_{t(a,w)}\overline{\tau }_{t(a,w)}, \text{if} \ \eta_{t(a,w)} = 0
\end{cases}.
\end{equation}

Passengers between OD pair $w\in W$ turn into takers in state $t(a,w)$ only if they are not paired in previous states, so the average arrival rate $\lambda_{t(a,w)}$ of unpaired takers for state $t(a,w)$  depends on the pairing probability in the preceding seeker- and taker-states of $t(a,w)$:
\begin{equation}\label{eq:lambda_t(a,w)}
\lambda_{t(a^{n},w)} =
\begin{cases}
\lambda_w(1-p_{s(w)}),n=0
\\ 
\lambda_{t(a^{n},w)} \left( 1-p_{t(a^{n-1},w)} \right), 1 \le n \le \left | A_w \right |
\end{cases}.
\end{equation}

Eqs. (\ref{eq:p_sw})-(\ref{eq:lambda_t(a,w)}) describe the complex interactions among the passengers' pairing probabilities in different states, i.e., $\boldsymbol{p_s}=(p_s(w),s(w)\in \mathbf{S})$, $\boldsymbol{p_t}=(p_{t(a,w)},t(a,w)\in \mathbf{T})$, the probability of passenger existence in every taker-state at any moment, i.e., $\boldsymbol{\rho}=(\rho_{t(a,w)},t(a,w)\in \mathbf{T})$, the arrival rate of pairing opportunities for each taker-state, i.e., $\boldsymbol{\eta}=(\eta_{t(a,w)}^{s(w')},t(a,w)\in \mathbf{T}_{s(w')},s(w')\in \mathbf{S})$, and the arrival rate of unpaired passengers for each taker state, i.e., $\boldsymbol{\lambda_t}=\lambda_{t(a,w)},t(a,w)\in \mathbf{T}$. To obtain the value of the $\boldsymbol{p_s}$, $\boldsymbol{p_t}$, $\boldsymbol{\rho}$, $\boldsymbol{\eta}$ and $\boldsymbol{\lambda_t}$, we have to solve Eqs.  (\ref{eq:p_sw})-(\ref{eq:lambda_t(a,w)}) simultaneously. As established in \cite{wang2021predicting}, the existence of solutions to the system of nonlinear equations (\ref{eq:p_sw})-(\ref{eq:lambda_t(a,w)}) is always guaranteed under mild conditions. In our experiments, Eqs. (\ref{eq:p_sw})-(\ref{eq:lambda_t(a,w)}) is solved with the simple fixed point iteration methods. 

\section{Proof of Proposition 1}\label{appendix:proof of proposition 1}

Let $p_1, p_2$ denote two waiting passengers, and $k_1,k_2$ respectively be the rounds of matching they have been kept waiting. 

In subsection \ref{subsec:forwad-looking matching strategy}, the Property 1 implies that if $\Bar{e}(p_1) =\Bar{e}(p_2), l(v,{p_1}) = l(v,{p_2})$, and $k_1>k_2$, then $\Bar{e}(p_1) [\frac{\Bar{e}(p_1)}{\Bar{e}(p_1)+l(v,p_1)}]\alpha^{k_1} > \Bar{e}(p_2) [\frac{\Bar{e}(p_2)}{\Bar{e}(p_2)+l(v,p_2)}]\alpha^{k_2}$. 
Apparently, this condition holds if and only if $\alpha > 1$.

The Property 2 implies that if ${k_1} = {k_2}$ and $\Bar{e}(p_1) > \Bar{e}(p_2), l(v,{p_1}) = l(v,{p_2})$, then we have \\
$\Bar{e}(p_1) [\frac{\Bar{e_1}(p)}{\Bar{e}(p_1)+l(v,p_1)}]\alpha^{k_1} > \Bar{e}(p_2) [\frac{\Bar{e}(p_2)}{\Bar{e}(p_2)+l(v,p_2)}]\alpha^{k_2}$. \\
It is easy to see that this condition holds as long as $\alpha > 0$.

The Property 3 implies that there exists a $k^* \in [0,K/\Delta t]$, such that the utility of waiting at the origin is no greater than the utility of matching with any feasible vehicle $v\in \vV_{0,p}\cup\vV_{1,p}$, i.e.,
\begin{equation}\label{eq:cond for property 3}
  \bar e(0,p) \leq \min\begin{Bmatrix}
\Bar{e}(p) [\frac{\Bar{e}(p)}{\Bar{e}(p)+l(v,p)}]\alpha^{k^*},  &v\in \vV_0\\ 
e(v_{p'},p) [\frac{e(v_{p'},p)}{e(v_{p'},p)+l(v,p)}]\alpha^{k^*},  &v\in \vV_1\\
\end{Bmatrix}.
\end{equation}

According to Eq. (\ref{eq:e(0,p)}), when $k \rightarrow K/\Delta t$, $\bar e(0,p) \rightarrow 0$, both $\Bar{e}(p) [\frac{\Bar{e}(p)}{\Bar{e}(p)+l(v,p)}]\alpha^k > 0 $ and $e(v_{p'},p) [\frac{e(v_{p'},p)}{e(v_{p'},p)+l(v,p)}]\alpha^k > 0$ always hold, which means $ \exists k^* \in [0,K/\Delta t]$, s.t. \\
 \[\bar e(0,p) = \Bar{e}(p) [\frac{\Bar{e}(p)}{\Bar{e}(p)+l(v,p)}]\alpha^{k^*}, \]\\
or $$ \bar e(0,p) = e(v_{p'},p) [\frac{e(v_{p'},p)}{e(v_{p'},p)+l(v,p)}]\alpha^{k^*}.$$

This completes the proof.

\bibliographystyle{apalike} 
\bibliography{literature}

\begin{thebibliography}{}

\bibitem[Alonso-Mora et~al., 2017]{doi:10.1073/pnas.1611675114}
Alonso-Mora, J., Samaranayake, S., Wallar, A., Frazzoli, E., and Rus, D. (2017).
\newblock On-demand high-capacity ride-sharing via dynamic trip-vehicle assignment.
\newblock {\em Proceedings of the National Academy of Sciences}, 114(3):462--467.

\bibitem[Bilali et~al., 2020]{bilali2020analytical}
Bilali, A., Engelhardt, R., Dandl, F., Fastenrath, U., and Bogenberger, K. (2020).
\newblock Analytical and agent-based model to evaluate ride-pooling impact factors.
\newblock {\em Transportation Research Record}, 2674(6):1--12.

\bibitem[Cordeau and Laporte, 2007]{cordeau2007dial}
Cordeau, J.-F. and Laporte, G. (2007).
\newblock The dial-a-ride problem: models and algorithms.
\newblock {\em Annals of operations research}, 153:29--46.

\bibitem[Dimitrakopoulos et~al., 2011]{dimitrakopoulos2011intelligent}
Dimitrakopoulos, G., Demestichas, P., and Koutra, V. (2011).
\newblock Intelligent management functionality for improving transportation efficiency by means of the car pooling concept.
\newblock {\em IEEE transactions on intelligent transportation systems}, 13(2):424--436.

\bibitem[Engelhardt et~al., 2019]{engelhardt2019quantifying}
Engelhardt, R., Dandl, F., Bilali, A., and Bogenberger, K. (2019).
\newblock Quantifying the benefits of autonomous on-demand ride-pooling: A simulation study for munich, germany.
\newblock In {\em 2019 IEEE Intelligent Transportation Systems Conference (ITSC)}, pages 2992--2997. IEEE.

\bibitem[Hosni et~al., 2014]{hosni2014shared}
Hosni, H., Naoum-Sawaya, J., and Artail, H. (2014).
\newblock The shared-taxi problem: Formulation and solution methods.
\newblock {\em Transportation Research Part B: Methodological}, 70:303--318.

\bibitem[Jung et~al., 2016]{jung2016dynamic}
Jung, J., Jayakrishnan, R., and Park, J.~Y. (2016).
\newblock Dynamic shared-taxi dispatch algorithm with hybrid-simulated annealing.
\newblock {\em Computer-Aided Civil and Infrastructure Engineering}, 31(4):275--291.

\bibitem[Ma et~al., 2013]{ma2013t}
Ma, S., Zheng, Y., and Wolfson, O. (2013).
\newblock T-share: A large-scale dynamic taxi ridesharing service.
\newblock In {\em 2013 IEEE 29th International Conference on Data Engineering (ICDE)}, pages 410--421. IEEE.

\bibitem[Santi et~al., 2014]{santi2014quantifying}
Santi, P., Resta, G., Szell, M., Sobolevsky, S., Strogatz, S.~H., and Ratti, C. (2014).
\newblock Quantifying the benefits of vehicle pooling with shareability networks.
\newblock {\em Proceedings of the National Academy of Sciences}, 111(37):13290--13294.

\bibitem[Santos and Xavier, 2013]{2013Dynamic}
Santos, D.~O. and Xavier, E.~C. (2013).
\newblock Dynamic taxi and ridesharing: A framework and heuristics for the optimization problem.
\newblock In {\em International Joint Conference on Artificial Intelligence}.

\bibitem[Santos and Xavier, 2015]{santos2015taxi}
Santos, D.~O. and Xavier, E.~C. (2015).
\newblock Taxi and ride sharing: A dynamic dial-a-ride problem with money as an incentive.
\newblock {\em Expert Systems with Applications}, 42(19):6728--6737.

\bibitem[Shah et~al., 2020]{2020Neural}
Shah, S., Lowalekar, M., and Varakantham, P. (2020).
\newblock Neural approximate dynamic programming for on-demand ride-pooling.
\newblock In {\em AAAI 2020}.

\bibitem[Stiglic et~al., 2016]{stiglic2016making}
Stiglic, M., Agatz, N., Savelsbergh, M., and Gradisar, M. (2016).
\newblock Making dynamic ride-sharing work: The impact of driver and rider flexibility.
\newblock {\em Transportation Research Part E: Logistics and Transportation Review}, 91:190--207.

\bibitem[Tachet et~al., 2017]{tachet2017scaling}
Tachet, R., Sagarra, O., Santi, P., Resta, G., Szell, M., Strogatz, S.~H., and Ratti, C. (2017).
\newblock Scaling law of urban ride sharing.
\newblock {\em Scientific reports}, 7(1):1--6.

\bibitem[Vazifeh et~al., 2018]{vazifeh2018addressing}
Vazifeh, M.~M., Santi, P., Resta, G., Strogatz, S.~H., and Ratti, C. (2018).
\newblock Addressing the minimum fleet problem in on-demand urban mobility.
\newblock {\em Nature}, 557(7706):534--538.

\bibitem[Wang et~al., 2021]{wang2021predicting}
Wang, J., Wang, X., Yang, S., Yang, H., Zhang, X., and Gao, Z. (2021).
\newblock Predicting the matching probability and the expected ride/shared distance for each dynamic ridepooling order: A mathematical modeling approach.
\newblock {\em Transportation Research Part B: Methodological}, 154:125--146.

\bibitem[Yang et~al., 2020]{yang2020optimizing}
Yang, H., Qin, X., Ke, J., and Ye, J. (2020).
\newblock Optimizing matching time interval and matching radius in on-demand ride-sourcing markets.
\newblock {\em Transportation Research Part B: Methodological}, 131:84--105.

\bibitem[Yu and Shen, 2020]{xian2020}
Yu, X. and Shen, S. (2020).
\newblock An integrated decomposition and approximate dynamic programming approach for on-demand ride pooling.
\newblock {\em IEEE Transactions on Intelligent Transportation Systems}, 21(9):3811--3820.

\bibitem[Zhu and Mo, 2022]{zhu2022potential}
Zhu, P. and Mo, H. (2022).
\newblock The potential of ride-pooling in vkt reduction and its environmental implications.
\newblock {\em Transportation Research Part D: Transport and Environment}, 103:103155.

\bibitem[Zwick et~al., 2022]{zwick2022shifts}
Zwick, F., Kuehnel, N., and H{\"o}rl, S. (2022).
\newblock Shifts in perspective: Operational aspects in (non-) autonomous ride-pooling simulations.
\newblock {\em Transportation Research Part A: Policy and Practice}, 165:300--320.

\bibitem[Zwick et~al., 2021a]{zwick2021agent}
Zwick, F., Kuehnel, N., Moeckel, R., and Axhausen, K.~W. (2021a).
\newblock Agent-based simulation of city-wide autonomous ride-pooling and the impact on traffic noise.
\newblock {\em Transportation Research Part D: Transport and Environment}, 90:102673.

\bibitem[Zwick et~al., 2021b]{zwick_ride-pooling_2021}
Zwick, F., Kuehnel, N., Moeckel, R., and Axhausen, K.~W. (2021b).
\newblock Ride-{Pooling} {Efficiency} in {Large}, {Medium}-{Sized} and {Small} {Towns} -{Simulation} {Assessment} in the {Munich} {Metropolitan} {Region}.
\newblock {\em Procedia Computer Science}, 184:662--667.

\bibitem[Zwick et~al., 2021c]{zwick2021ride}
Zwick, F., Kuehnel, N., Moeckel, R., and Axhausen, K.~W. (2021c).
\newblock Ride-pooling efficiency in large, medium-sized and small towns-simulation assessment in the munich metropolitan region.
\newblock {\em Procedia Computer Science}, 184:662--667.

\end{thebibliography}

\end{document}